\documentclass[sigconf]{acmart}
%\documentclass[sigconf, natbib=true,review]{acmart}
%\documentclass[sigplan, anonymous, review]{acmart}
%\documentclass[sigconf,natbib=true,anonymous=true]{acmart}
%%%%%%%%%%% Packages are here
\usepackage{natbib}
\usepackage{graphicx}
\usepackage{amsthm}
\newtheorem{theorem}{Theorem}[section]
\newtheorem{corollary}{Corollary}[section]
\newtheorem{lemma}{Lemma}[section]
\newtheorem{claim}{Claim}[section]
\makeatletter
\newcommand{\vo}{\vec{o}\@ifnextchar{^}{\,}{}}
\newcommand{\norm}[1]{\left\lVert#1\right\rVert}
\makeatother
\setlength{\textfloatsep}{0.1cm}

\usepackage{subcaption}
\usepackage{wrapfig}
%\graphicspath{ {./images/} }
\usepackage[ruled]{algorithm2e}
\usepackage{booktabs}
\usepackage{pifont}
\usepackage[flushleft]{threeparttable}
%\usepackage{minted}
%\usepackage[utf8]{inputenc}
% \usepackage{caption}
%%%%%%%%%%%
%\setcopyright{rightsretained} 

\begin{document}

\fancyhead{}
\copyrightyear{2021} 
\acmYear{2021} 
\acmConference[CIKM '21]{Proceedings of the 30th ACM International Conference on Information and Knowledge Management}{November 1--5, 2021}{Virtual Event, Australia}
\acmBooktitle{Proceedings of the 30th ACM International Conference on Information and Knowledge Management (CIKM '21), Nobember 1--5, 2021, Virtual Event, Australia}\acmDOI{10.1145/XXXXXXXXXX}
\acmISBN{XXXXXXX-XX/20/10}
\settopmatter{printacmref=true}

%\title{HopfE:  Representational Learning of Knowledge Graphs using Inverse Hopf Fibrations}
\title{HopfE: Knowledge Graph Representation Learning using Inverse Hopf Fibrations}

\author{Anson Bastos}
\email{cs20resch11002@iith.ac.in}
\affiliation{%
  \institution{IIT, Hyderabad}
  %\city{Aachen}
  \country{India}
}

\author{Kuldeep Singh}
\email{kuldeep.singh1@cerence.com}
\affiliation{%
  \institution{Cerence GmbH and Zerotha Research}
%   \& Zerotha Research,
  %\city{Aachen}
  \country{Germany}
}

\author{Abhishek Nadgeri}
\email{abhishek.nadgeri@rwth-aachen.de}
\affiliation{%
  \institution{RWTH Aachen and Zerotha Research}
  %\city{Aachen}
  \country{Germany}
}

\author{Saeedeh Shekarpour}
\email{sshekarpour1@udayton.edu}
\affiliation{%
  \institution{University of Dayton}
  %\city{Hannover}
   \country{USA}
}

\author{Isaiah Onando Mulang'}
\email{mulang.onando@ibm.com}
\affiliation{%
  \institution{IBM Research and Zerotha Research}
 %\city{Sankt Augustin}
  \country{Kenya}
}

\author{Johannes Hoffart}
\email{johannes.hoffart@gs.com}
\affiliation{%
   \institution{Goldman Sachs}
  \country{Germany}
}
 
%\title{HopfE: Semantically Enriched Knowledge Graph Embeddings using Inverse Hopf Fibrations}
%\title{HopfE: Knowledge Graph Embeddings using Inverse Hopf Fibrations}
%\title{HopfE:  Interpretable and Expressive Knowledge Graph Embeddings using Inverse Hopf Fibrations}
%\title{HopfE: Towards Geometric Interpretability and Expressiveness of Knowledge Graph Embeddings using Inverse Hopf Fibrations}

\renewcommand{\shortauthors}{Bastos et al.}

%%%%%%%%%%%%%% custom commands
\newcommand{\NAME}{\textsc{Plumber}}
%%%%%%%%%%%%%%

\begin{abstract}

%Recently, several Knowledge Graph Embedding (KGE) approaches have been devised to model entitiesand relations in vector space for missing link prediction in a KG. 
Recently, several Knowledge Graph Embedding (KGE) approaches have been devised to represent entities
and relations in a dense vector space and employed in downstream tasks such as link prediction.
A few KGE techniques address \textit{interpretability}, i.e., mapping the connectivity patterns of the relations (symmetric/asymmetric, inverse, and composition) to a geometric interpretation such as rotation. Other approaches model the representations in higher dimensional space such as four-dimensional space (4D) to enhance the ability to infer the connectivity patterns (i.e., \textit{expressiveness}). However, modeling relation and entity in a 4D space often comes at the cost of interpretability.
This paper proposes HopfE, a novel KGE approach aiming to achieve interpretability of inferred relations in the 
four-dimensional space. We first model the structural embeddings in 3D Euclidean space and view the relation operator as an $SO(3)$ rotation. Next, we map the entity embedding vector from a 3D Euclidean space to a 4D hypersphere using the inverse Hopf Fibration, in which we embed the semantic information from the KG ontology. Thus, HopfE considers the structural and semantic properties of the entities without losing expressivity and interpretability. Our empirical results on four well-known benchmarks achieve state-of-the-art performance for the KG completion task.
\end{abstract}
%%These approaches employ a transformation function that maps entities via relations into a vector space to calculate the likelihood of missing links. 
%% The code below is generated by the tool at http://dl.acm.org/ccs.cfm.
%% Please copy and paste the code instead of the example below.
%%
\begin{CCSXML}
<ccs2012>
<concept>
<concept_id>10010147.10010178.10010187</concept_id>
<concept_desc>Computing methodologies~Knowledge representation and reasoning</concept_desc>
<concept_significance>500</concept_significance>
</concept>
</ccs2012>
<concept>
<concept_id>10002951.10002952.10002953.10002959</concept_id>
<concept_desc>Information systems~Entity relationship models</concept_desc>
<concept_significance>300</concept_significance>
</concept>

%<ccs2012>
 %  <concept>
  %     <concept_id>10010147.10010178.10010179</concept_id>
   %    <concept_desc>Computing methodologies~Natural language processing</concept_desc>
    %   <concept_significance>500</concept_significance>
     %  </concept>
   %<concept>
    %   <concept_id>10002951.10003260.10003277</concept_id>
     %  <concept_desc>Information systems~Web mining</concept_desc>
      % <concept_significance>500</concept_significance>
      % </concept>
   %<concept>
    %   <concept_id>10010147.10010178.10010187</concept_id>
     %  <concept_desc>Computing methodologies~Knowledge representation and reasoning</concept_desc>
      % <concept_significance>300</concept_significance>
       %</concept>
 %</ccs2012>
\end{CCSXML}

%\ccsdesc[500]{Computing methodologies~Natural language processing}
%\ccsdesc[500]{Information systems~Web mining}
%\ccsdesc[300]{Computing methodologies~Knowledge representation and reasoning}

\keywords{Representation learning, Embedding, Knowledge Graph }

\maketitle

\section{Introduction} \label{sec:introduction}
%Public KGs finds wide applicability in several downstream tasks such as entity linking, relation extraction, and fact-checking, and question answering \cite{ji2020survey}. 
%\todo[inline]{Botht the caption and sub-figures of Figure 1 are incomplete. I assume the given triple is (h,r,t) and it transformed to (k,?,j), this piece of knowledge should be added to the caption,  2) understanding the transformation from S2 or S3 to S3 or S4 is not straightforward, is there anyway to improve that?}

\begin{figure*}[t!]
  \begin{subfigure}[t]{0.20\textwidth}
    \includegraphics[width=\textwidth, viewport=0.1in -10 193 208, clip=false]{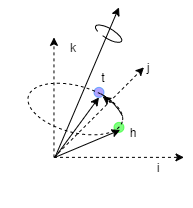}
    \caption{NagE}
    % \vspace{-2mm}
  \end{subfigure}%
  ~
  \begin{subfigure}[t]{0.20\textwidth}
    \includegraphics[width=\textwidth, viewport=0.1in 0 214 208, clip=false]{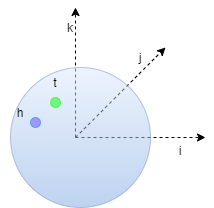}
    \caption{QuatE}
    % \label{fig:hist_q_inv}
    % \vspace{-2mm}
  \end{subfigure}
  ~
%   \hspace{2.1mm}
  \begin{subfigure}[t]{0.20\textwidth}
    \includegraphics[width=1.25\textwidth, viewport=-0.2in 0.1in 201 225, clip=false]{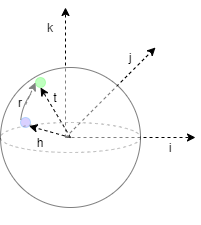}
    \caption{HopfE step 1: Rotation}
    % \label{fig:hist_q1_inv}
    % \vspace{-2mm}
  \end{subfigure}
  ~
%   \hspace{0.00mm}
  \begin{subfigure}[t]{0.20\textwidth}
    \includegraphics[width=\textwidth, viewport=-0.18in 0 210 217, clip=false]{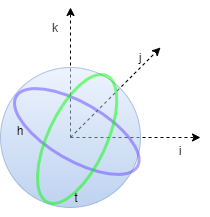}
    \caption{HopfE step 2: Fibration}
    % \label{fig:hist_q2_inv}
    % \vspace{-2mm}
  \end{subfigure}

  \caption{For a given triple <h,r,t> in three-dimensions (i,j,k); NagE provides geometric interpretability by rotating the entities in a 3D space (Figure (a)). Conversely, QuatE (Figure (b)) achieves higher order expressiveness by transforming the entities using the relational quaternion in 4D space, although it loses geometric interpretability. Here, the entities are stereo-graphically projected on a 3D sphere for understanding. HopfE first performs a rotation in 3D followed by a fibration to 4D as shown in Figures (c) and (d). This allows retaining geometric interpretability in 4D as entities are now depicted as fibers. In Figure (d), the stereo-graphic projection of two fibrations is shown before performing the rotations.}
  %\caption{(a) RotatE model rotates the entities in the complex plane. (b) QuatE transforms the entities using the relational quaternion in $S^3$. The entities are displayed on a stereographically projected sphere. (c) HopfE first performs a rotation in $S^2$ followed by a fibration to $S^3$ as in (d). The stereographic projection of the fibrations are shown before the rotations however we perform them after applying the relational operator and in an ideal case for a positive triple the two fibers would overlap. } Models achieving expressiveness in higher dimensions such as QuatE (Figure (b)), transforms the entities using the relational quaternion in $S^3$ though looses geometric interpretability. , unlike QuatE, where entities are points
  \label{fig:HopfEcomparision}
      \vspace{-3mm}
\end{figure*}

Publicly available Knowledge Graphs (KGs) (e.g., Freebase \cite{DBLP:conf/aaai/BollackerCT07} and Wikidata \cite{DBLP:conf/www/Vrandecic12}) expose interlinked representation of structured knowledge 
%introduce a new avenue of structured knowledge 
%\todo[inline]{SS: "introduce a new avenue of structured knowledge"--> expose interlinked representation of structured knowledge   }
for several information extraction tasks such as entity linking, relation extraction, and question answering \cite{bastos2020recon,sakor2020falcon,lu2019answering}. 
%Due to the changing dynamics of the world where new entities emerge and unknown relations about them are learned, KGs require continuous effort to be updated and maintained. 
However, publishing ever-growing KGs is often confronted with incompleteness, inconsistency, and inaccuracy \cite{ji2016knowledge}.
%However, the construction of massive scale KGs is often confronted with incompleteness and sparseness \cite{ji2016knowledge}. 
%\todo[inline]{construction typically means to create something from scratch while these KG are typically published or extracted from other formats }
%\todo[inline]{"However, the construction of massive scale KGs is often confronted with incompleteness and sparseness \cite{ji2016knowledge}."  --->
%However, publishing ever-growing KGs is often confronted with incompleteness, inconsistency, and inaccuracy. \cite{ji2016knowledge}.}
%\todo[inline]{what do you mean by "sparseness", I guess you mean incompleteness?? }
Knowledge Graph Embeddings (KGE) methods have been
proposed to overcome these challenges by learning entity and relation representations (i.e., embeddings) in vector spaces by capturing latent structures in the data \cite{wang2014knowledge}. 
%\todo[inline]{Knowledge Graph Embeddings (KGE) methods learn dense representations (so-called `embeddings´) for entities and relations by capturing latent structure and semantics of background knowledge \cite{wang2014knowledge}. or ´embeddings´The dense vectors can be employed in downstream learning tasks to overcome KG challenges.}
A typical KGE approach follows a two-step process \cite{gesese2019survey,wang2017knowledge}:
1) determining the form of entity and relation representations,
%2) defining a scoring function to calculate plausibility of a KG triple, and 3) learning entity and relation representations.
%\todo[inline]{form is not a clear term here, maybe behavior}
2) use a transformation function to learn entity and relation representations in a vector space for calculating the plausibility of KG triples via a scoring function.
%\todo[inline]{entities through relations might not be correct, whether both representation are leaned concurrently or individually?}
Intuitively, KGE approaches consider a set of original "connectivity patterns" in the KGs and learn representations that are approximately able to reason over the given patterns (\textit{expressiveness} property
of a KGE model \cite{wang2018multi}). 
%Intuitively, KGE methods infer the "connectivity patterns" in the KGs, and the ability to reason them is broadly considered as \textit{expressiveness}of a KGE model \cite{wang2018multi}. \todo[inline]{Intuitively, KGE approaches consider a set of original "connectivity patterns" in the KGs, and learn representations which are approximately able to reason over the given patterns (\textit{expressiveness} attributeof a KGE model \cite{wang2018multi}). }
%For example, few KG relations can be symmetric (e.g., sibling) or antisymmetric (e.g., father); someare the inverse of other relations (e.g., husband and wife); and few others are compositional relations (e.g., mother's brother is uncle).  $(h,r,t) -> (t,r,h)$
For example, in a connectivity pattern such as symmetric relations (e.g. sibling), anti-symmetric (e.g., father), inverse relations (e.g., husband and wife), or
compositional relations (e.g., mother's brother is uncle); \textit{expressiveness} means to hold these patterns in a vector space. 

\textbf{Limitations of state-of-the-art KGE Models}: 
%The initial KGE approaches (e.g., TransE \cite{bordes2013translating}) can infer specific connectivity patterns via the proposed scoring function using translation methods. 
%KGs are open, and largely explainable in terms of direct relationships between neighboring entities.
KGs are openly available, and in addition they are largely explainable because of the explicit semantics and structure (e.g., direct and labeled relationships between neighboring entities).
In contrast, KG embeddings are often referred to as sub-symbolic representation since they illustrate elements in the vector space, losing the original \textit{interpretability} deduced from logic or geometric interpretations \cite{palmonari2020knowledge}. Similar issue has been observed in a few initial KGE methods \cite{bordes2013translating,wang2014knowledge,lin2015learning} that lack \textit{interpretability}, i.e., these models fail to link the connectivity patterns to a geometric interpretation such as rotation \cite{sun2018rotate}. For example, in inverse relations, the angle of rotation is negative; in composition, the rotation angle is an addition ($r_1+r_2=r_3$), etc. Other KGE approaches such as RotatE \cite{sun2018rotate} and NagE \cite{yang2020nage} provide \textit{interpretability} by introducing rotations in two-dimensional (2D) and three-dimensional (3D) spaces respectively, to preserve the connectivity patterns. Conversely, QuatE \cite{DBLP:conf/nips/0007TYL19} imposes a transformation on the entity representations by quaternion multiplication with the relation representation. 
QuatE aims for higher dimensional \textit{expressiveness} by performing entity and relational transformation in hypercomplex space (4D in this case) \cite{DBLP:conf/nips/0007TYL19}. Quaternions enable expressive modeling in four-dimensional space and have more degree of freedom
than transformation in complex plane \cite{DBLP:conf/nips/0007TYL19,zhu2018quaternion,zhang2020beyond}.
%as quaternion multiplication 
Albeit expressive and empirically powerful, QuatE fails to provide explicit and understandable geometric interpretations due to its dependency on unit quaternion-based transformation \cite{yang2020nage,lu2020dense}. Hence, it is an open research question as to how can a KGE approach maintain the geometric \textit{interpretability} and simultaneously achieves the model's \textit{expressiveness} in four-dimensional space? 
%We aim to answer the concerned question.

%Hence, it remains an open question as how to fill the gap between the \textit{interpretability} of the model's rotations and its \textit{expressiveness} in four-dimensional space. In this paper, we aim to answer the following research question: can a KGE approach maintain the geometric \textit{interpretability} and simultaneously increase the model's \textit{expressiveness} in four-dimensional space?%\todo[inline]{4D is represented as $S^3$  or $S^4$ }
\textbf{Proposed Hypothesis:} The essential research gap to achieve interpretability in four-dimensional space for a KGE embedding motivates our work.
QuatE transforms one point (i.e., entity representation) to another in the 4D hypersphere $S^3$ \cite{DBLP:conf/nips/0007TYL19}. 
%\todo[inline]{This sentence does not make sense from 4D to 3D??? "QuatE transforms a point (i.e., entity representation) in the 4D hypersphere $S^3$ to another point in the $S^3$."}
Unlike QuatE, we hypothesize that it is viable to map a three-dimensional point to a circle
%\todo[inline]{what fiber means here? it is a term which has not been introduced before}
in the four-dimensional space. 
%\todo[inline]{why did you choose this hypothesis? just to unlike QuatE, you should dicuss what are the deficiencies of QuatE and based on which facts you place this hypothesis?}
Our rationale for the choices are: 1) the relational rotations in 3D sphere $S^2$ are already proven to be interpretable (i.e., \cite{yang2020nage,gao2020rotate3d}). However, transformations as done in QuatE lack geometric \textit{interpretability} \cite{lu2020dense}. Hence, performing relation rotations in $S^2$ will allow us to inherit these proven interpretable properties. 2) Mapping the entity representation of $S^2$ to a circle in a higher dimension can illustrate these representations as fibers \cite{whitney1940theory} (i.e., manifolds) 
%\todo[inline]{what fiber means here? it is a term which has not been introduced before}
for maintaining geometric interpretations while providing the \textit{expressiveness} of 4D embeddings (cf., Figure \ref{fig:HopfEcomparision}). In other words, by representing entities as fibers, we can aim for a direct mapping between rotations in $S^2$ to transformation of fibers in the $S^3$, without compromising \textit{interpretability} in four-dimensional space.
3) By converting a point of $S^2$ to a circle in $S^3$, we aim to semantically enrich entity embeddings through KG ontology (entity-aliases, entity-types, descriptions, and entity-labels as textual literals \cite{gesese2019survey}) onto the points on the fiber.

\textbf{Approach:} In this paper, we present HopfE- a novel KGE approach aiming to increase the \textit{expressiveness} in higher dimensional space while maintaining the geometric \textit{interpretability} (Figure \ref{fig:HopfEcomparision}). Specifically, in HopfE, we first rotate the relational embeddings in the 3D Euclidean space using Quaternion \cite{vicci2001quaternions} rotations. In the next step, the point (i.e., entity representation) in the 3D sphere is converted to a circle on the 4D hypersphere using inverse Hopf Fibration \cite{hopf1931abagrams}. We propose to use the extra dimension gained by Hopf mapping to also represent the entity attributes for semantically enriching the embeddings. \\
\textbf{Contributions}. The key contributions of this work are: 1)  Our work (theoretically and empirically) legitimizes Hopf Fibration's application in KG embeddings. To the best of our knowledge, we are the first to incorporate Hopf Fibration mapping for any representation learning approach. 2) We bridge the important gap between geometric \textit{interpretability} and four-dimensional \textit{expressiveness} of a knowledge graph embedding method. 3)  We provide a way to induce entity semantics in 4D, keeping the rotational interpretability intact. 4) Our empirical results achieves comparable state-of-the-art performance on four well-known KG completion datasets.

The structure of the paper is follows: Section \ref{sec:related} reviews the related work. Section \ref{sec:problem} formalizes the problem. Section \ref{sec:theoretical} describes theoretical foundations and Section  \ref{sec:method} describes our proposed approach.
Section \ref{sec:experiments} describes experiment setup. Our results are discussed in Section \ref{sec:results}. We conclude in Section \ref{sec:conclusion}.

%%%%%%%%%%%%%%%%%%%%%%%%%%%%%%%%%%%%%%%%%%%%%%%%%%%%%%%%%%%%%%%%%%%
%%%%%%%%%%%%%%%%%%%%%%%%%%%%%%%%%%%%%%%%%%%%%%%%%%%%%%%%%%%%%%%%%%%
%%%%%%%%%%%%%%%%%%%%%%%%%%%%%%%%%%%%%%%%%%%%%%%%%%%%%%%%%%%%%%%%%%%%%%%%%%%%%%%%%%%%%%%%%%%%%%%%%%%%%%%%%%%%%%%%%%%%%%%%%%%%%%%%%%%%%%%%%%%%%%%%%%%%%%%%%%%%%%%%%%%%%%%%%%%%%%%%%%%%%%%%%%%%%%%%%%%%%%%%%%%%%%%%%%%%%%%%%%%%%%%%%%%%%%%%%%%%%%%%%%%%%%%%%%%%%%%%%%%%%%%%%%%%%%%%%%%%%%%%%%%%%%%%%%%%%%%%%%%%%%%%%%%%%%%%%%%%%%%%%%%%%%%%%%%%%%%%%%%%%%%%%%%%%%%%%%%%%%%%%%%%%%%%%%%%%%%%%%%%%%%%%%%%%%%%%%%%%%%%%%%%%%%%%%%%%%%%%%%%%%%%%%%%%%%%%%%%%%%%%%%%%%%%%%%%%%%%%%%%%%%%%%%%%%%%%%%%%%%%%%%%%%%%%%%%%%%%%%%%%%%%%%%%%%%%%%%%%%%%%%%%%%%%%%%%%%%%%%%%%%%%%%%%%%%%%%%%%%%%%%%%%%%%%%%%%%%%%%%%%%%%%%%%%%%%%%%%%%%%%%%%%%%%%%%%%%%%%%%%%%%%%%%%%%%%%%%%%%%%%%%%%%%%%%%%%%%%%%%%%%%%%%%%%%
\section{Related Work} \label{sec:related}
Considering KGE is an extensively studied topic in the literature \cite{allen2021interpreting,zhao2020convolutional,ji2020survey,chami2020low,gao2020rotate3d}, we stick to the work closely related to our approach. Previous works on KG embeddings are primarily divided into translation and semantic matching models \cite{wang2017knowledge}. Translational models use distance-based scoring functions which are often based on learning a translation from the head entity to the tail entity. The first translation-based approach was TransE \cite{bordes2013translating} which models the composition of the entities and relations as $e_1 + r = e_2$. 
The second category (i.e., semantic matching) uses similarity-based scoring functions. The basic idea is to measure the plausibility of KG triples by matching the latent semantics of entities and relations expressed in their vector space representations. Distmult \cite{DBLP:journals/corr/YangYHGD14a} is a semantic matching model where the entities are represented as vectors. The relations are diagonal matrices, and the composition is a bilinear product of the entities and relations. ComplEx \cite{trouillon2016complex} is a modification of DistMult where the parameters are in the complex space. RotatE \cite{sun2018rotate} was the first to propose the translation as a rotation of the entities in the complex plane for geometric \textit{interpretability}. RotatE models the rotation in the 2D plane and has symmetric, antisymmetric, composition, and inversion properties.  This rotation, however, induces a strict compositional dependence between the relations. Yang et al. \cite{yang2020nage} introduced NagE, that provides a group theoretic perspective of rotations similar to RotatE by using Non-Abelian ($SO(3)$ and $SU2E$) groups. 
These groups give a geometrical interpretation of rotations in 3D, and being non-commutative groups, they solve the problem of strict compositionality.
QuatE \cite{DBLP:conf/nips/0007TYL19} has modeled the relational rotations in the quaternion four-dimensional space, thus making the model more expressive. QuatE also studied that increasing spatial dimensions such as to Octonion does not increase performance compared to modeling relation and entities in 4D. 
\textit{We position our work at the intersection of QuatE, and NagE approaches (cf. Figure \ref{fig:HopfEcomparision})}. 
In contrast with QuatE that does transformation using quaternion multiplication, we perform $SO(3)$ rotations using quaternion relation to inherit the geometric \textit{interpretability} of NagE. The inverse Hopf Fibration lets us perform the dimensional mapping which has found applications in diverse domains such as rigid body mechanics \cite{marsden1995introduction} and quantum information theory \cite{mosseri2001geometry}. Furthermore, there are extensive works for inducing schema and ontology information in the embedding models such as entity types \cite{zhao2020connecting,lin2016knowledge}, entity descriptions \cite{wang2016text}, and literals \cite{gesese2019survey}. We also aim to induce entity semantic properties (textual literals) into HopfE. Unlike existing methods \cite{gesese2019survey}, that concatenate the semantic and latent embeddings, we would like to look at the latent space for representing entity attributes.

%%%%%%%%%%%%%%%%%%%%%%%%%%%%%%%%%%%%%%%%%%%%%%%%%%%%%%%%%%%%%%%%%%%
%%%%%%%%%%%%%%%%%%%%%%%%%%%%%%%%%%%%%%%%%%%%%%%%%%%%%%%%%%%%%%%%%%%
%%%%%%%%%%%%%%%%%%%%%%%%%%%%%%%%%%%%%%%%%%%%%%%%%%%%%%%%%%%%%%%%%%%%%%%%%%%%%%%%%%%%%%%%%%%%%%%%%%%%%%%%%%%%%%%%%%%%%%%%%%%%%%%%%%%%%%%%%%%%%%%%%%%%%%%%%%%%%%%%%%%%%%%%%%%%%%%%%%%%%%%%%%%%%%%%%%%%%%%%%%%%%%%%%%%%%%%%%%%%%%%%%%%%%%%%%%%%%%%%%%%%%%%%%%%%%%%%%%%%%%%%%%%%%%%%%%%%%%%%%%%%%%%%%%%%%%%%%%%%%%%%%%%%%%%%%%%%%%%%%%%%%%%%%%%%%%%%%%%%%%%%%%%%%%%%%%%%%%%%%%%%%%%%%%%%%%%%%%%%%%%%%%%%%%%%%%%%%%%%%%%%%%%%%%%%%%%%%%%%%%%%%%%%%%%%%%%%%%%%%%%%%%%%%%%%%%%%%%%%%%%%%%%%%%%%%%%%%%%%%%%%%%%%%%%%%%%%%%%%%%%%%%%%%%%%%%%%%%%%%%%%%%%%%%%%%%%%%%%%%%%%%%%%%%%%%%%%%%%%%%%%%%%%%%%%%%%%%%%%%%%%%%%%%%%%%%%%%%%%%%%%%%%%%%%%%%%%%%%%%%%%%%%%%%%%%%%%%%%%%%%%%%%%%%%%%%%%%%%%%%%%%%%%%%

\section{Problem Formulation} \label{sec:problem}
% todo[inline]{Taken from RECON. Make changes to prevent plagiarism?}
We define a KG as a tuple $KG = (\mathcal{E},\mathcal{R},\mathcal{T}^+)$ where $\mathcal{E}$ denotes the set of entities (vertices), $\mathcal{R}$ is the set of relations (edges), and $\mathcal{T}^+ \subseteq \mathcal{E} \times \mathcal{R} \times \mathcal{E} $ is a set of all triples.
A triple $\uptau = (e_h,r_{ht},e_t) \in \mathcal{T}^+$ indicates that, for the relation $r_{ht} \in \mathcal{R}$, $e_h$ is the head entity (origin of the relation) while $e_t$ is the tail entity. Since $KG$ is a multigraph; $e_h=e_t$ may hold and $|\{r_{e_h,e_t}\}|\geq 0$ for any two entities. We define the tuple $(A^e,\tau^e) = \varphi(e)$ obtained from a context retrieval function $\varphi$, that returns, for any given entity $e$, the set: $A^e$ which is a set of all attributes of the entity such as aliases, descriptions, and type. $\tau^e \subset \mathcal{T}^+$ is the set of all triples with head at $e$.
%\todo[inline]{you did not define $\tau^e$}
The \textit{KB completion task} (KBC) predicts the entity pairs $\langle e_i,e_j\rangle$ in the Knowledge Graph that have a relation $r_{ij} \in \mathcal{R}$ between them.  
%\todo[inline]{Rephrase: \textit{ The KB completion task} (KBC) predicts a relation $r_{ij} \in \mathcal{R}$  between a given entity pair $\langle e_i,e_j\rangle$ of the Knowledge Graph.}
Similar to other researchers~\cite{DBLP:conf/nips/0007TYL19,yang2020nage}, we view KBC as a ranking task. 
%\todo[inline]{add one more supportive sentence to make it clear}
In addition, we also aim to model KG contextual information to improve the ranking. 
%This is achieved by learning representations of the set $A^e$ as described in section \ref{sec:method}.

%%%%%%%%%%%%%%%%%%%%%%%%%%%%%%%%%%%%%%%%%%%%%%%%%%%%%%%%%%%%%%%%%%%%%%%%%%%%%%%%%%%%%%%%%%%%%%%%%%%%%%%%%%%%%%%%%%%%%%%%%%%%%%%%%%%%%%%%%%%%%%%%%%%%%%%%%%%%%%%%%%%%%%%%%%%%%%%%%%%%%%%%%%%%%%%%%%%%%%%%%%%%%%%%%%%%%%%%%%%%%%%%%%%%%%%%%%%%%%%%%%%%%%%%%%%%%%%%%%%%%%%%%%%%%%%%%%%%%%%%%%%%%%%%%%%%%%%%%%%%%%%%%%%%%%%%%%%%%%%%%%%%%%%%%%%%%%%
%%%%%%%%%%%%%%%%%%%%%%%%%%%%%%%%%%%%%%%%%%%%%%%%%%%%%%%%%%%%%%%%%%%
%%%%%%%%%%%%%%%%%%%%%%%%%%%%%%%%%%%%%%%%%%%%%%%%%%%%%%%%%%%%%%%%%%%
%%%%%%%%%%%%%%%%%%%%%%%%%%%%%%%%%%%%%%%%%%%%%%%%%%%%%%%%%%%%%%%%%%%%%%%%%%%%%%%%%%%%%%%%%%%%%%%%%%%%%%%%%%%%%%%%%%%%%%%%%%%%%%%%%%%%%%

\section{Theoretical Foundations} \label{sec:theoretical}
\subsection{Quaternions and Rotations in 3D}
Quaternions \cite{hamilton1844lxxviii} are a numerical representation of vectors in four dimensions. 
%Quaternions \cite{hamilton1844lxxviii} are a number system representing vectors in four dimensions. \todo[inline]{Rephrase: Quaternions \cite{hamilton1844lxxviii} are a numerical representation of vectors in four dimensions. }
It can be considered as an ordered tuple $(a,b,c,d)$ of numbers represented by $a + bi + cj + dk$ where $a$ is the real component and $b,c,d$ are the imaginary components.
%\todo[inline]{what a real and imaginary component means?}
%The multiplication rules for the unit basis vectors $i,j,k$ can be encapsulated as
%\begin{align*}
 %   i^2 = j^2 = k^2 = -1 \\
  %  ij=k jk=i ki=j  \\
   % ji=-k kj=-i ik=-j
%\end{align*}
The norm \cite{pugh2002real} of the quaternion $a+bi+cj+dk$ can be represented by $\sqrt{a^2+b^2+c^2+d^2}$. From hereon, "$*$" is multiplication operation, "$\odot$" is an element-wise product, "$\otimes$" denotes Hamilton product.
%Now, we sketch the result for rotation in 3D sphere below:
In the following, we describe the details of rotation in 3D sphere.

\begin{theorem}\label{thm_so3}
For a pure quaternion $v=p_1i+p_2j+p_3k$ in 3D that has real components as zero, the quaternion $r=a+bi+cj+dk$ with unit norm gives a rotational map $rvr^{-1}$, where the axis of rotation denoted by bi+cj+dk, and the angle of rotation by $2\arccos{a}$
\end{theorem}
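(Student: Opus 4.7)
The plan is to use the standard polar decomposition of the unit quaternion together with a decomposition of $v$ into components parallel and perpendicular to the axis. First I would observe that since $\|r\|=1$ we have $r^{-1}=\bar r = a - bi - cj - dk$, which simplifies the conjugation $rvr^{-1}$ to a purely algebraic computation via the Hamilton product.

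Next, I would write $r$ in polar form: setting $s=\sqrt{b^2+c^2+d^2}$ and $\hat n = (bi+cj+dk)/s$, the unit-norm condition $a^2+s^2=1$ lets me set $a=\cos(\theta/2)$ and $s=\sin(\theta/2)$, so that $r=\cos(\theta/2)+\sin(\theta/2)\,\hat n$ with $\theta = 2\arccos(a)$. This immediately identifies the candidate rotation angle and the candidate unit axis $\hat n$ (collinear with $bi+cj+dk$). A short computation then shows $r\hat n r^{-1}=\hat n$, because $\hat n$ commutes with both $a$ and $\sin(\theta/2)\hat n$; hence the axis claim is proved.

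The main content is the geometric interpretation for vectors orthogonal to the axis. I would decompose any pure quaternion as $v = v_\parallel + v_\perp$ with $v_\parallel$ along $\hat n$ and $v_\perp\perp \hat n$ (in the Euclidean sense on the imaginary $3$-space). The parallel part is fixed by the previous step, so attention reduces to $rv_\perp r^{-1}$. The key identity is that two pure quaternions $u,w$ satisfy $uw+wu = -2\langle u,w\rangle$, so $\hat n v_\perp = -v_\perp \hat n$ and $\hat n^2 = -1$. Expanding
\begin{align*}
rv_\perp r^{-1} &= \bigl(\cos(\tfrac{\theta}{2})+\sin(\tfrac{\theta}{2})\hat n\bigr) v_\perp \bigl(\cos(\tfrac{\theta}{2})-\sin(\tfrac{\theta}{2})\hat n\bigr)\\
&= \cos\theta\, v_\perp + \sin\theta\,(\hat n v_\perp),
\end{align*}
using the two identities above, exhibits exactly a rotation by $\theta$ in the plane orthogonal to $\hat n$, since $\hat n v_\perp$ is orthogonal to both $\hat n$ and $v_\perp$ and has the same norm as $v_\perp$ (it corresponds to the cross product $\hat n \times v_\perp$). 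Combined with the invariance of $v_\parallel$, this shows that $v\mapsto rvr^{-1}$ is a rotation of axis $\hat n$ (equivalently $bi+cj+dk$) and angle $\theta = 2\arccos(a)$.

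I expect the main obstacle to be purely notational: keeping the Hamilton product expansions organized and invoking the anti-commutation identity $\hat n v_\perp = -v_\perp\hat n$ at the right moment so that the cross-terms collapse cleanly into $\cos\theta\, v_\perp + \sin\theta\,(\hat n v_\perp)$. A minor preliminary check that $rvr^{-1}$ remains a pure quaternion (so the output indeed lives in the same $3$-space as $v$) would be included for completeness; it follows either from the real-part invariance of conjugation by a unit quaternion or directly from the above decomposition.
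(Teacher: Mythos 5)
Your proposal is correct, and it takes a genuinely different route from the paper. The paper's proof is coordinate-based: it verifies norm preservation via multiplicativity of the quaternion norm, shows by direct Hamilton-product expansion that $bi+cj+dk$ is fixed, and then computes the image of one specific perpendicular vector $w=ci-bj$ (with a separate remark for the degenerate case $b=c=0$) to extract $\cos\theta = 2a^2-1$. You instead write $r$ in polar form $\cos(\theta/2)+\sin(\theta/2)\hat n$, decompose $v=v_\parallel+v_\perp$, and use the anticommutation identity $\hat n v_\perp = -v_\perp \hat n$ together with $\hat n^2=-1$ to collapse the conjugation to $\cos\theta\, v_\perp + \sin\theta\,(\hat n v_\perp)$. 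Your argument buys more: it exhibits the action on the \emph{entire} plane orthogonal to the axis at once (so one sees directly that the map is an orientation-preserving rotation, and norm preservation falls out for free), it avoids the special-casing of particular perpendicular vectors, and the purity of $rvr^{-1}$ is immediate from the decomposition. The paper's computation is more elementary in the sense that it needs no polar form or commutation lemmas, but it only pins down the angle by testing a single vector and leans on the reader to accept that fixing the axis plus one angle measurement determines the rotation. The only detail you should add for completeness is the degenerate case $b=c=d=0$ (where $\hat n$ is undefined, $r=\pm1$, and the map is the identity, consistent with angle $2\arccos(\pm1)$).
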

%todo[inline]{The Theorem is very unclear, what "the scalar component is taken to be 0" means?}
\begin{proof}
 We divide the proof into three sections where we prove that the norm of $v$ is preserved, where $(b,c,d)$ is the eigen vector of the rotational map, and the angle of rotation can be given in terms of $a$. The rotational map $R_r(v) = rvr^{-1}$ has the norm $\norm{R_r(v)} = \norm{rvr^{-1}} = \norm{r}*\norm{v}*\norm{r^{-1}}$. We have assumed $r$ to have unit norm, it's inverse will also have unit norm. Hence, the norm of $v$ is maintained. To prove that $b,c,d$ is the eigen vector of the rotational map, it is sufficient to show that the rotation of $bi+cj+dk$ under the rotation map results in a scaled version of the vector:
% \begin{equation}
\begin{align*}
    R_r(bi+cj+dk) &= (a + bi + cj + dk)\otimes\\
    &(bi + cj + dk)\otimes\frac{(a - bi - cj - dk)}{a^2+b^2+c^2+d^2} \\
    &= bi + cj + dk
\end{align*}
%\b{equation}
%Hence we get the same vector with scaling operation.
Hence we obtain the same vector with scaling operation. Now we prove that the angle of rotation can be given in terms of $a$. We can consider $a$ vector perpendicular to the axis vector $bi + cj + dk$. Let $w = ci - bj$ be this vector without loss of generality. Then, it's rotation is given by:
\begin{align*}
    R_r(w) &= (a + bi + cj + dk)\otimes(ci - bj)\\
    &\otimes(a - bi - cj -dk) \\
        %   &= ((ac+db)i + (-ad+dc)j - (b^2+c^2)k)\\
        %   &= ((-b^2+c^2) + (ab+cd)i + (-ac+db)j + (-2bc)k) \\
           &= ((ac+db)i + (-ab+dc)j - (b^2+c^2)k)\\
           &\otimes(a - bi - cj -dk) \\
          &= (c*(a^2 - b^2 -c^2 - d^2) + 2adb)i + \\
          &(-b*(a^2 - b^2 -c^2 - d^2) + 2adc)j + (-2ab^2 -2ac^2)k
\end{align*}
The cosine of the angle between w and $R_r(w)$ is given by
\begin{align*}
    \cos{\theta} &= \frac{w.R_r(w)}{\lVert w \lVert ^2} \\
                 &= a^2 - b^2 - c^2 -d^2 = 2a^2 - 1 \\
          \theta &= 2\arccos{a}   
\end{align*}
If $b=c=0$, we take $w=i$ to get the same result.
Thus, the rotational map $a + bi + cj + dk$ encodes the axis and angle of rotation.
\end{proof}

%%%%%%%%%%%%%%%%%%%%%%%%%%%%%%%%%%%%%%%%%%%%%%%%%%%%%%%%%%%%%%%%%%%
%%%%%%%%%%%%%%%%%%%%%%%%%%%%%%%%%%%%%%%%%%%%%%%%%%%%%%%%%%%%%%%%%%%
%%%%%%%%%%%%%%%%%%%%%%%%%%%%%%%%%%%%%%%%%%%%%%%%%%%%%%%%%%%%%%%%%%%
\subsection{Hopf Fibration}
Two spaces are said to be homotopic if there exists a set of continuous deformations from one space to another \cite{baues1989algebraic}.
%\todo[inline]{Two spaces behave as homotopic when a set of continuous deformations from one space to another exists \cite{baues1989algebraic}.}
%These deformations could be a shrinking, bending, or stretching of the space \cite{baues1989algebraic}. 
For example, a solid circular disc is homotopic to a point as it could be deformed along the radial lines to a point. %The set of deformations in space are called homotopy.
\begin{definition}[Homotopy Lifting]
Given a map $\pi: E \xrightarrow[]{} B$ between two topological spaces $E$ and $B$ and a topological space $X$, we say that $\pi$ has a homotopy lifting property with respect to $X$ if for a homotopy $g: X \times [0,1] \xrightarrow[]{} B$, there exists a homotopy $f: X \times [0,1] \xrightarrow[]{} E$ such that $\pi \circ f = g$.
\end{definition}
\begin{definition}[Fibration]
A fibration is a map $f: E \xrightarrow[]{} B$ between two topological spaces $E$ and $B$ such that the homotopy lifting property is satisfied for all spaces.
\end{definition}
First introduced in 1931, Hopf Fibration is a mapping from $S^3$ to $S^2$ \cite{hopf1931abagrams,lyons2003elementary}. For the quaternion $r = a + bi + cj + dk$ this map is given by
\begin{equation}
    M(r) = (a^2+b^2-c^2-d^2, 2(ad+bc), 2(bd-ac))
\end{equation}
This is equivalent to applying the rotational map $R_r$ to the point (1,0,0). The Hopf map ($M(r)$) assigns all $r$ in $S^3$ to $P$ in $S^2$, such that $R_r$ would rotate $(1,0,0)$ to $P$.
 We could also verify that multiplying $e^{it}=cos(t)+i*sin(t)$ to this map doesn't change the rotation. The inverse Hopf map gives the pre-image of point $P$ by:
\begin{lemma}\label{hopf_lemma}
For a point $P = (p1,p2,p3)$ the inverse Hopf map is given by $M^{-1}(P) = r'e^{it}$ where $0 \leq t \leq 2\pi$ and $r'$ can be given by one of the below equations:
\begin{equation}\label{eq_inv_hopf_map_1}
  r' = \frac{1}{\sqrt{2(1+p_1)}}((1+p_1)i + p_2j + p_3k)
\end{equation}
\centering{OR}
\begin{equation}\label{eq_inv_hopf_map_2}
  r' = \frac{\sqrt{1+p_1}}{2}*(1-\frac{p_3i}{1+p_1}+\frac{p_2k}{1+p_1})
\end{equation}
% $r = \frac{(1+p1)i + p2j + p3k}{\sqrt{2(1+p1)}}$ and 
\end{lemma}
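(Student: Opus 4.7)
\textbf{Proof proposal for Lemma \ref{hopf_lemma}.} My plan is to split the claim into three ingredients: (a) the Hopf map $M$ is constant along any right-coset of the circle subgroup $\{e^{it} : t \in [0,2\pi]\}$; (b) the explicit $r'$ from \eqref{eq_inv_hopf_map_1} lies on $S^3$ and satisfies $M(r')=P$; and (c) every fiber of $M$ is exactly such a coset, so there are no extra pre-images of $P$ outside the orbit through $r'$. Putting these three together yields $M^{-1}(P) = r'\cdot\{e^{it}\}$, which is the statement.

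Ingredient (a) and the sharper version of (c) both fall out of a single identity. As the excerpt already notes, $M(r) = r\,i\,r^{-1}$, since it is just the $SO(3)$ rotation $R_r$ of Theorem \ref{thm_so3} applied to the base point $(1,0,0)\equiv i$. Because $e^{it}=\cos t + i \sin t$ commutes with $i$, I immediately get
\begin{equation*}
M(r e^{it}) \;=\; (r e^{it})\, i\, (r e^{it})^{-1} \;=\; r\,(e^{it}\, i\, e^{-it})\, r^{-1} \;=\; r\, i\, r^{-1} \;=\; M(r).
\end{equation*}
The same identity also characterises the stabiliser of $i$ under conjugation as exactly $\{e^{it}\}$, since $q i q^{-1}=i$ forces the vector part of $q$ to be parallel to $i$; hence the fibers of $M$ are the cosets $r'\cdot\{e^{it}\}$, each of which is a circle traced out bijectively by $t\in[0,2\pi)$.

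For ingredient (b) with \eqref{eq_inv_hopf_map_1}, I would read off $a=0$, $b=\sqrt{(1+p_1)/2}$, $c=p_2/\sqrt{2(1+p_1)}$, $d=p_3/\sqrt{2(1+p_1)}$, and verify by direct computation, using $p_1^2+p_2^2+p_3^2=1$, first that $a^2+b^2+c^2+d^2 = [(1+p_1)^2+(1-p_1^2)]/[2(1+p_1)] = 1$ (so $r'\in S^3$), and then that the three coordinates $(a^2+b^2-c^2-d^2,\,2(ad+bc),\,2(bd-ac))$ collapse to $(p_1,p_2,p_3)$ after invoking $p_2^2+p_3^2 = (1-p_1)(1+p_1)$. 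For the alternative form \eqref{eq_inv_hopf_map_2}, rather than redoing the algebra I would prefer to exhibit a specific $t_0$ such that the second expression equals the first times $e^{it_0}$, thereby reducing to the already-settled case and explaining the phrasing \emph{``one of the below equations''}: both formulas parametrise the same fiber, with \eqref{eq_inv_hopf_map_2} serving as an alternate chart that behaves better near the coordinate singularity $p_1=-1$ of \eqref{eq_inv_hopf_map_1}.

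The main obstacle I anticipate is precisely this bookkeeping, particularly in the third coordinate $2(bd-ac)$, where a nonzero scalar part $a$ in $r'$ makes it easy to invert a sign and wreck the identification. A related but more genuinely geometric subtlety is verifying that \eqref{eq_inv_hopf_map_1} and \eqref{eq_inv_hopf_map_2} together cover all of $S^2$ and agree on the overlap (up to the $e^{it}$ ambiguity), which is what legitimises the disjunctive phrasing; that step is the one place where one cannot simply push symbols around, since it requires identifying the fixed relative phase $t_0$ and checking continuity of the two charts across the antipode of the base point.
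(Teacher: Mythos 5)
Your proposal is correct and, for Equation \ref{eq_inv_hopf_map_1}, rests on the same underlying identity as the paper's proof --- both arguments use $M(r)=r\,i\,r^{-1}$ and check that the circle $r'e^{it}$ lands on $P$ --- but your organization is cleaner and strictly more complete. Where the paper expands $r'\otimes e^{it}$ and its conjugate coordinate-by-coordinate and then multiplies everything back out, you get $M(re^{it})=M(r)$ in one line from the commutation of $e^{it}$ with $i$, leaving only the single evaluation $M(r')=P$, which your $a=0$ bookkeeping handles correctly (with $p_1^2+p_2^2+p_3^2=1$ the three coordinates do collapse to $(p_1,p_2,p_3)$, and the norm is $1$). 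More importantly, your ingredient (c) --- the stabiliser of $i$ under conjugation by unit quaternions is exactly $\{e^{it}\}$, hence the fiber is the full coset and nothing more --- is entirely absent from the paper's proof, which only shows that the circle is \emph{contained in} $M^{-1}(P)$, never that it \emph{equals} it. Since the lemma asserts equality, your version is the more rigorous one.

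One concrete warning about your plan for Equation \ref{eq_inv_hopf_map_2}: the reduction ``find $t_0$ so that the second expression equals the first times $e^{it_0}$'' will not go through against the Hopf map as the paper defines it. Take $P=(0,0,1)$: Equation \ref{eq_inv_hopf_map_2} gives a quaternion proportional to $1-i$, and
\begin{equation*}
M\left(\tfrac{1}{\sqrt{2}}(1-i)\right)=(1,0,0)\neq P;
\end{equation*}
moreover the stated prefactor $\sqrt{1+p_1}/2$ yields norm $1/\sqrt{2}$ rather than $1$. So Equation \ref{eq_inv_hopf_map_2} is not a preimage of $P$ under $M(r)=(a^2+b^2-c^2-d^2,\,2(ad+bc),\,2(bd-ac))$; it appears to belong to a different coordinate convention for the Hopf map. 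The paper's own proof silently sidesteps this by verifying only Equation \ref{eq_inv_hopf_map_1} and appending ``also applicable to Equation \ref{eq_inv_hopf_map_2}'' without computation. Your method is sound, but the phase-shift step will expose an inconsistency in the statement rather than confirm it; the honest conclusion is that only Equation \ref{eq_inv_hopf_map_1} is actually established.
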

\begin{proof}
% Since $h(r) = r \otimes i \otimes r^{-1} = (re^{it}) \otimes i \otimes (re^{it})^{-1}$, it can be verified by substituting for r that $h(r) = p_1i + p_2j + p_3k$ and hence the proof.
We begin by noting that (from Theorem \ref{thm_so3}),
\begin{align*}
    M(r'e^{it}) = (r'e^{it}) \otimes i \otimes (r'e^{it})^{-1} 
\end{align*}
Substituting for $r'e^{it}$ from Equation \ref{eq_inv_hopf_map_1} we get,
\begin{align*}
    r' \otimes e^{it} &= \frac{1}{\sqrt{2(1+p_1)}}((1+p_1)i + p_2j + p_3k) \otimes\\
    &(\cos{t} + i*\sin{t})) \\
    &= \frac{1}{\sqrt{2(1+p_1)}}*(\sin{t}*(1+p_1) + \\
    &\cos{t}*(1+p_1))i + (p_2*\cos{t}+p_3*\sin{t})j +\\
    &(p_3*\cos{t}-p_2*\sin{t})k)
\end{align*}
Similarly we get the compliment $\overline{r'e^{it}}$ as,
\begin{align*}
    \overline{r' \otimes e^{it}} &= \frac{1}{\sqrt{2(1+p_1)}}*(\sin{t}*(1+p_1) \\
    & -\cos{t}*(1+p_1))i - (p_2*\cos{t}+p_3*\sin{t})j +\\
    & -(p_3*\cos{t}-p_2*\sin{t})k)
\end{align*}
The Hopf Map of the set of points on $r' \otimes e^{it}$ is the point in $S^2$ that is obtained by rotation of $(1,0,0)$ using any of these quaternions,
\begin{align*}
    M(r'e^{it}) &= (r'e^it) \otimes (i) \otimes (\overline{r'e^it}) \\
    &= p_1i + p_2j + p_3k = P
\end{align*}
therefore, $M^{-1}{(P)} = r'e^{it}$. (also applicable to Equation \ref{eq_inv_hopf_map_2}).
\end{proof}
A point on the 3D sphere is mapped to a circle by the Equation \ref{eq_inv_hopf_map_1}. Furthermore, a set of points is mapped to a group of linked rings, and a circle in $S^2$ is mapped to a torus in $S^3$ as can be seen in the Figure \ref{fig:hopf_maps}. The property of linked circles is stated in Lemma \ref{linked_circles}.
\begin{lemma}\label{linked_circles}
Every pair of fiber formed by the Hopf Map are linked to each other forming a set of linked circles.
\end{lemma}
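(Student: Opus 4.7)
The plan is to prove the Hopf-link structure of fiber pairs in three stages: first establish that distinct fibers are disjoint embedded circles in $S^3$; then compute the linking number for one convenient canonical pair; and finally extend the result to all pairs by continuity. Disjointness is immediate from the definition of the Hopf map: if $P \neq Q$ in $S^2$ and $r \in M^{-1}(P) \cap M^{-1}(Q)$, then $M(r) = P = Q$, a contradiction. The circle structure of each individual fiber follows from Lemma \ref{hopf_lemma}: the fiber $M^{-1}(P) = \{r' e^{it} : t \in [0, 2\pi]\}$ is the orbit of a unit quaternion under right multiplication by the unit-modulus complex subgroup $\{e^{it}\}$, which is the intersection of $S^3 \subset \mathbb{R}^4$ with the real 2-plane $\mathrm{span}_{\mathbb{R}}\{r', r' i\}$, i.e.\ a great circle.

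For the canonical pair, I would take $P = (1,0,0)$ and $Q = (-1,0,0)$. Using Equation \ref{eq_inv_hopf_map_1} for $P$ (and the alternate form in Equation \ref{eq_inv_hopf_map_2} for $Q$, since Equation \ref{eq_inv_hopf_map_1} is singular there), the two fibers reduce to two orthogonal great circles in $S^3$ lying in complementary 2-planes of $\mathbb{R}^4$. Stereographic projection from a point that lies on neither fiber sends these two circles to the standard Hopf link in $\mathbb{R}^3$: one image is a straight line closing up through infinity, and the other is a circle encircling that line. By direct inspection, the linking number of this configuration is $\pm 1$, in particular nonzero.

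To extend the result to an arbitrary distinct pair $(P_1, Q_1)$, I would invoke continuity. The configuration space $(S^2 \times S^2) \setminus \Delta$ of ordered distinct pairs is path-connected, so there is a continuous path from $((1,0,0),(-1,0,0))$ to $(P_1, Q_1)$. Along this path, the induced pair of fibers in $S^3$ varies as a continuous one-parameter family of disjoint oriented great circles (disjointness by the first step), hence an ambient isotopy. Because the linking number is a $\mathbb{Z}$-valued isotopy invariant of disjoint oriented loops in $S^3$, it is constant along the path and therefore equals $\pm 1$ for every pair of distinct fibers, so every pair is linked.

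The main obstacle is the canonical-case verification. The stereographic projection must be centered off both fibers, and tracing the formulas of Lemma \ref{hopf_lemma} to identify the projected images as a standard circle-around-a-line Hopf configuration requires a careful change of coordinates; this is where the bulk of the work sits. An alternative would be to compute the Gauss linking integral directly from the $t$-parametrization in Lemma \ref{hopf_lemma}, but the quaternion multiplications make that calculation considerably more tedious than the projection route. The continuity step is clean but tacitly borrows the isotopy invariance of linking number from algebraic topology, a tool otherwise not used in the paper.
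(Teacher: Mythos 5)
The paper states Lemma~\ref{linked_circles} without any proof at all, so there is nothing to compare your argument against; your proposal would in fact supply the missing justification. Your three-stage plan is the standard and essentially correct proof: disjointness of distinct fibers is immediate from the definition of a preimage, each fiber $\{r'e^{it}\}$ is the intersection of $S^3$ with the $2$-plane $\mathrm{span}_{\mathbb{R}}\{r', r'i\}$ and hence a great circle, and the reduction of the general case to one canonical pair via path-connectedness of $(S^2\times S^2)\setminus\Delta$ and the invariance of linking number under homotopy through disjoint loops is sound.

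One concrete slip: Equation~\ref{eq_inv_hopf_map_2} does not rescue you at $Q=(-1,0,0)$, since both Equation~\ref{eq_inv_hopf_map_1} and Equation~\ref{eq_inv_hopf_map_2} carry the factor $1+p_1$ in a denominator or under a radical and degenerate at the antipode. The fix is easy and worth making explicit: compute the fiber over $(-1,0,0)$ directly from the Hopf map, since $a^2+b^2-c^2-d^2=-1$ together with $a^2+b^2+c^2+d^2=1$ forces $a=b=0$, so that fiber is the unit circle in the $j$--$k$ plane, while Equation~\ref{eq_inv_hopf_map_1} gives $r'=i$ over $(1,0,0)$ and hence the circle $\{-\sin t+i\cos t\}$ in the $1$--$i$ plane. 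These are orthogonal great circles in complementary $2$-planes of $\mathbb{R}^4$, whose stereographic image is the standard line-and-circle Hopf link with linking number $\pm 1$; alternatively you could avoid the degenerate point altogether by choosing a canonical pair such as $(1,0,0)$ and $(0,0,1)$ at the cost of a slightly messier computation. With that correction the argument goes through; it does import isotopy invariance of the linking number from algebraic topology, which is a heavier tool than anything else in the paper, but for a statement of this kind that is unavoidable.
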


\begin{figure*}[htbp]
  \centering
% Was this the ultimate question of life, the universe and everything?--->maybe not :)
% \setkeys{Gin}{width=0.42\linewidth,height=0.42\linewidth}
\setkeys{Gin}{width=0.16\linewidth,height=0.15\linewidth}
  \includegraphics{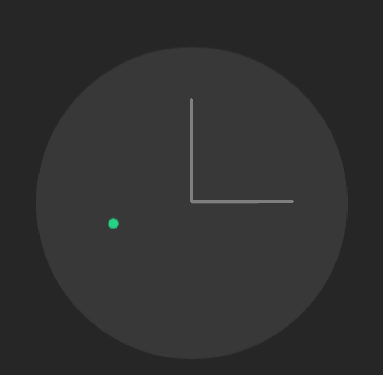}\,%
  \includegraphics{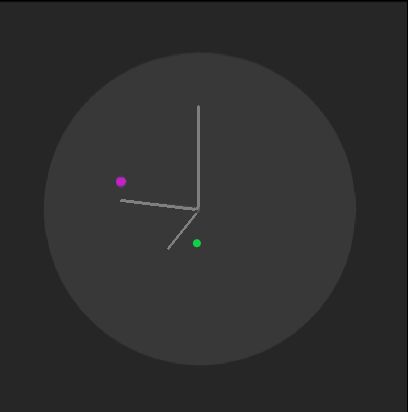}\,%
  \includegraphics{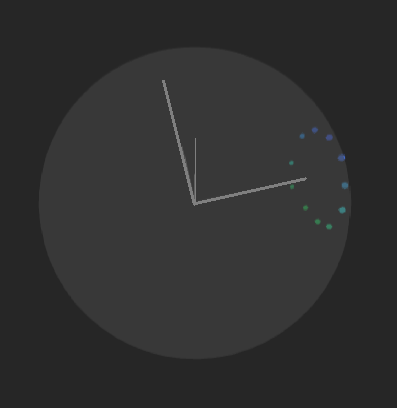}\,%
  \includegraphics{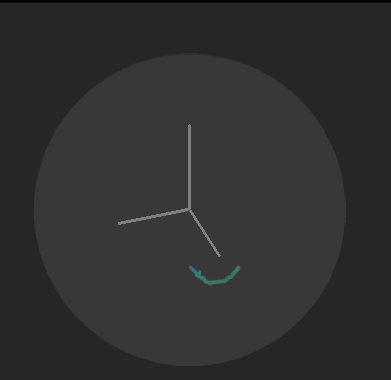}\,%
  \includegraphics{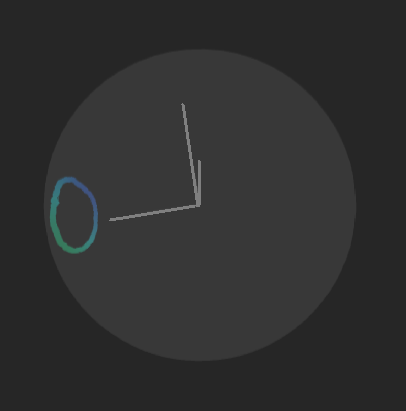}

  \includegraphics{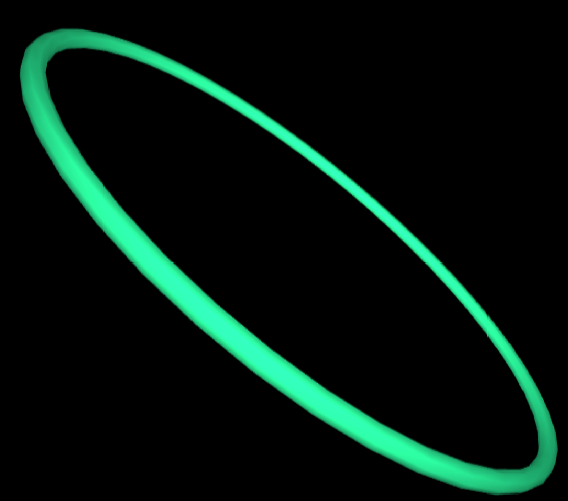}\,%
  \includegraphics{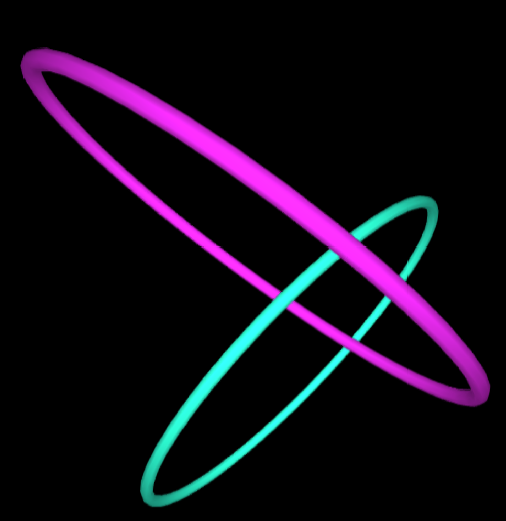}\,%
  \includegraphics{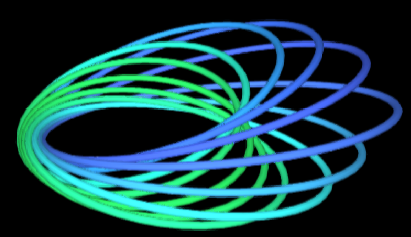}\,%
  \includegraphics{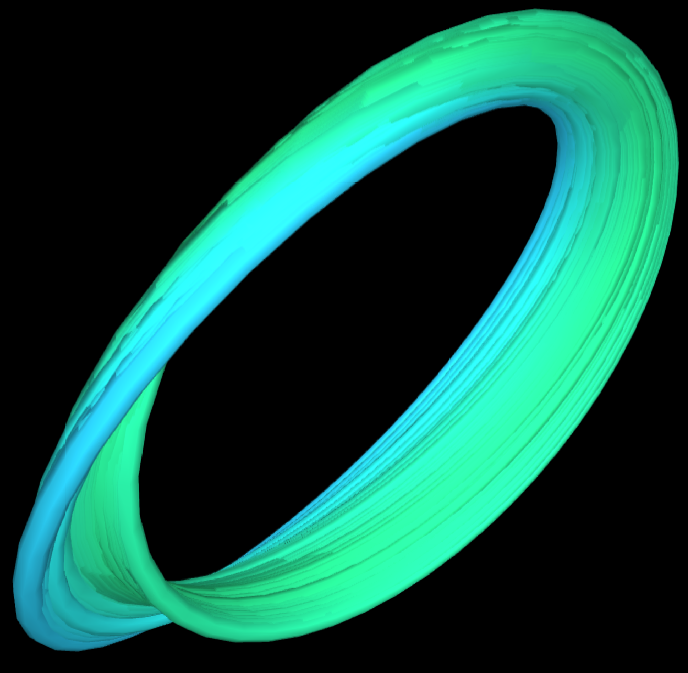}\,%
  \includegraphics{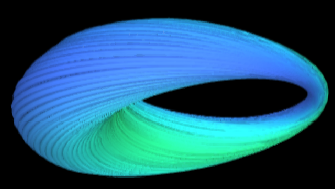}
  
  \includegraphics{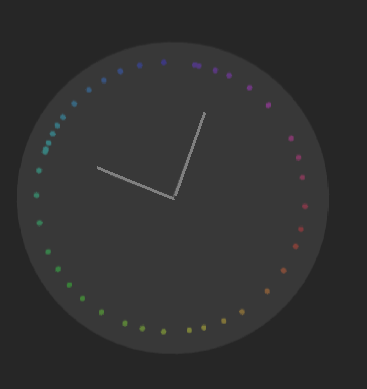}\,%
  \includegraphics{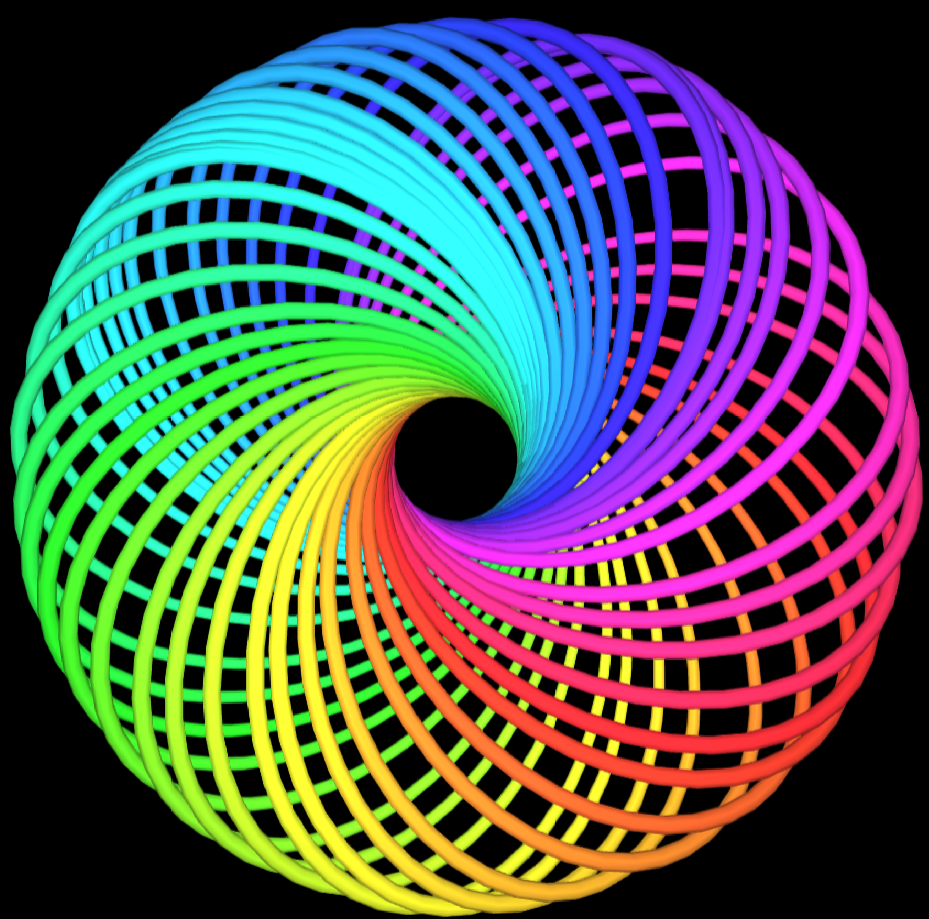}\,%
  \includegraphics{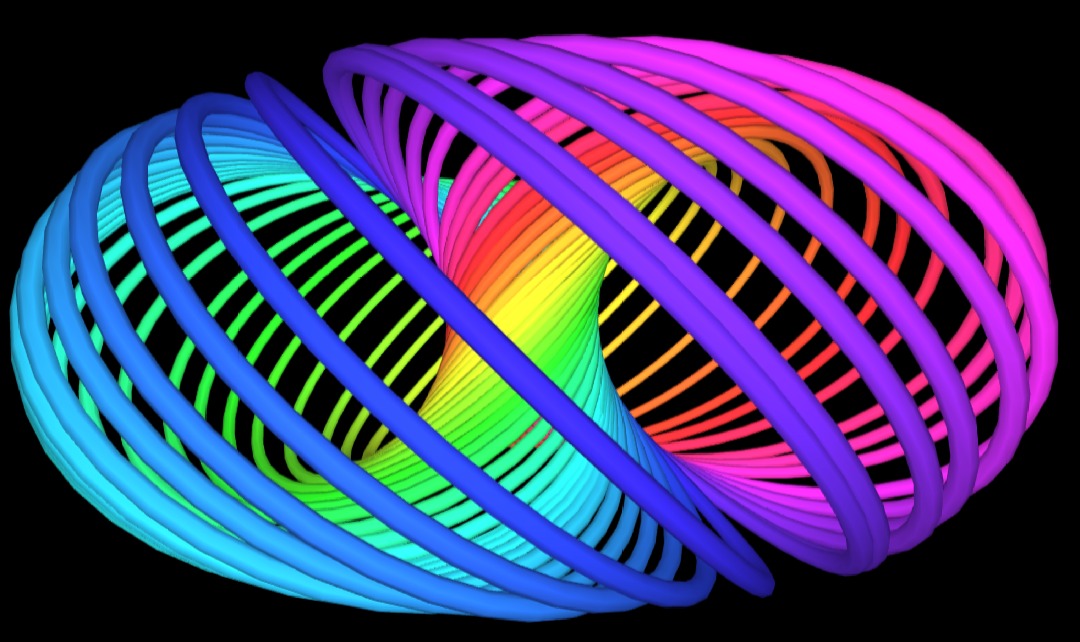}\,%
  \includegraphics{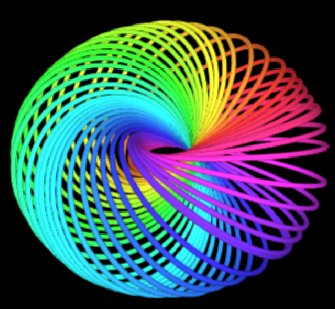} 
 \caption{We show the Hopf fibers of points on $S^2$ visualized \cite{visualization} using the stereo-graphic projection from $S^3$ to $S^2$. The first row from the top illustrates (left to right): a point, two points, discrete points, arc, and circle in 3D space. The second row (left to right) illustrates the stereographic projection of corresponding fibers in 3D. The third row (left to right) shows the set of points on the grid circle of $S^2$ and its corresponding top-view, side-view, and isometric views, respectively. }
  %\caption{Hopf fibers of points on $S^2$ visualized using the stereo-graphic projection from $S^3$ to $S^2$. The first row shows the points and the second row shows the corresponding fibers. The third row shows the top, side and isometric views of the set of points on the diameter of $S^2$}
  \label{fig:hopf_maps}
\end{figure*}
%\footnotetext{Hopf maps for visualizaation created using \url{https://samuelj.li/hopf-fibration/}}

%%%%%%%%%%%%%%%%%%%%%%%%%%%%%%%%%%%%%%%%%%%%%%%%%%%%%%%%%%%%%%%%%%%
%%%%%%%%%%%%%%%%%%%%%%%%%%%%%%%%%%%%%%%%%%%%%%%%%%%%%%%%%%%%%%%%%%%
%%%%%%%%%%%%%%%%%%%%%%%%%%%%%%%%%%%%%%%%%%%%%%%%%%%%%%%%%%%%%%%%%%%
%%%%%%%%%%%%%%%%%%%%%%%%%%%%%%%%%%%%%%%%%%%%%%%%%%%%%%%%%%%%%%%%%%%%%%%%%%%%%%%%%%%%%%%%%%%%%%%%%%%%%%%%%%%%%%%%%%%%%%%%%%%%%%%%%%%%%%
\subsection{Optimal Transport}
%The optimal matching between two sets can be found using the well studied problem of optimal transport 
The aim of optimal transport \cite{monge1781memoire} is to minimize the cost of transport from one (probability) measure to another. The well known Wasserstein Distance \cite{villani2009a} defines the optimal transport plan to move an amount of matter from one location to another. We rely on the principle of optimal transport considering we aim to map entity representation from one plane to another hyperplane. 
\begin{definition}[]
Let $p \in [1,\infty)$ and c: $R^n \times R^n$ $\xrightarrow{} [0,\infty)$ be the cost of transporting the measure $\mu$ to $\nu$, then the $p^{th}$ Wasserstein distance between the measures is given by
\begin{equation} \label{wasserstein}
    W_p(\mu,\nu) = \inf_{\gamma \in \Pi(\mu,\nu)}\left(\int_{R^n \times R^n} c(x,y)^p \partial{\gamma}\right)^{\frac{1}{p}}
\end{equation}
where $\Pi$ is the transport plan between the measures $\mu$ and $\nu$.
\end{definition}

\begin{theorem}\label{thm_min_dist}
The minimum of the euclidean distance between the points of the Hopf fibers of two points in the $\delta$ vicinity in $S^2$ is bounded from above by $\delta^2$. 
\end{theorem}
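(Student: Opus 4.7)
The plan is to construct explicit representatives on each of the two Hopf fibers and bound the Euclidean distance between them; since the minimum over the two fibers can only be smaller, the stated bound on that minimum will follow. I would use Lemma \ref{hopf_lemma} at phase $t = 0$ in Equation \ref{eq_inv_hopf_map_1} to assign to $P = (p_1, p_2, p_3)$ and $Q = (q_1, q_2, q_3)$ on $S^2$ the canonical lifts
\[ r'_P = \frac{(1+p_1)i + p_2 j + p_3 k}{\sqrt{2(1+p_1)}}, \qquad r'_Q = \frac{(1+q_1)i + q_2 j + q_3 k}{\sqrt{2(1+q_1)}}, \]
each of them unit quaternions lying on the fibers $M^{-1}(P)$ and $M^{-1}(Q)$ respectively.

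The central computation is then $\norm{r'_P - r'_Q}^2 = 2 - 2\langle r'_P, r'_Q\rangle$, and expanding the $\mathbb{R}^4$ inner product from the explicit formulas gives
\[ \langle r'_P, r'_Q\rangle = \frac{1 + p_1 + q_1 + \langle P, Q\rangle_{\mathbb{R}^3}}{2\sqrt{(1+p_1)(1+q_1)}}. \]
Since $\norm{P} = \norm{Q} = 1$ and $\norm{P - Q} = \delta$, polarization gives $\langle P, Q\rangle = 1 - \delta^2/2$. Substituting this and rearranging via the identity $2 + p_1 + q_1 - 2\sqrt{(1+p_1)(1+q_1)} = \big(\sqrt{1+p_1} - \sqrt{1+q_1}\big)^2$ produces
\[ \norm{r'_P - r'_Q}^2 = \frac{\delta^2 - 2\big(\sqrt{1+p_1} - \sqrt{1+q_1}\big)^2}{2\sqrt{(1+p_1)(1+q_1)}}. \]
The second numerator term is nonpositive, so dropping it yields the upper bound $\delta^2 \big/ \big( 2\sqrt{(1+p_1)(1+q_1)}\big)$, which away from the singular pole of the chart is of order $\delta^2$ as claimed.

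The main obstacle I foresee is the singularity of Equation \ref{eq_inv_hopf_map_1} at the antipode $p_1 = -1$, where the normalizing denominator vanishes and the bound above degenerates. I would handle this by repeating the same computation with the alternative lift in Equation \ref{eq_inv_hopf_map_2}, which is regular on a neighborhood of that point; the two parametrizations cover $S^2$ between them, so at least one always applies, and taking the smaller of the two estimates delivers a global bound of the same structural form. The decisive algebraic step — the appearance of $\big(\sqrt{1+p_1} - \sqrt{1+q_1}\big)^2$ as the only sign-definite correction — is the crux of the argument, since it is precisely what cancels the linear-in-$\delta$ contribution and lets the $\delta^2$ scaling emerge.
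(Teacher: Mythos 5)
Your proposal is correct and reaches the same conclusion as the paper (a bound of order $\delta^2$ on the \emph{squared} distance, which is what the paper's proof actually establishes), but by a genuinely different and more elementary route. The paper parametrizes both fibers by their phases $t$ and $t'$, writes the squared distance $D(t,t')$ between generic points, sets $\partial D/\partial t = \partial D/\partial t' = 0$, and solves for the approximate minimizers as $\delta \to 0$ before substituting back; your argument sidesteps the calculus entirely by observing that an upper bound on a minimum needs only a single witness pair, and then computing the distance between the canonical lifts at $t = t' = 0$ exactly in closed form. The identity $2 + p_1 + q_1 - 2\sqrt{(1+p_1)(1+q_1)} = \bigl(\sqrt{1+p_1}-\sqrt{1+q_1}\bigr)^2$ does exactly the work you say it does, and your resulting bound $\lVert r'_P - r'_Q\rVert^2 \le \delta^2/\bigl(2\sqrt{(1+p_1)(1+q_1)}\bigr)$ is exact and non-asymptotic, whereas the paper's is only valid to leading order in $\delta$. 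What the paper's approach buys in exchange is the location of the actual minimizing phases, which your witnesses are not. One caveat: your proposed patch near the antipode does not quite work as stated, because Equation \ref{eq_inv_hopf_map_2} still carries $1+p_1$ in its denominators and is therefore just as singular at $p_1 = -1$ as Equation \ref{eq_inv_hopf_map_1}; to make the bound uniform you would need a genuinely different chart (e.g., a lift based at $(-1,0,0)$, or a preliminary rotation moving $P$ away from the pole). This gap, however, is shared by the paper's own proof, whose solved values of $t$ and $t'$ also degenerate at $p_1=-1$ and at $p_2^2+p_3^2=0$, so your argument is no weaker than the original on this point.
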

\begin{proof}
Consider a point $P=(p_1,p_2,p_3)$ in $S^2$ and another point $P^{'}=(p_1^{'},p_2^{'},p_3^{'})$ such that \[\lVert P-P^{'}\rVert=\delta\]
From Lemma \ref{hopf_lemma} we know that the the fiber of $P$ is given by \[\frac{1}{\sqrt{2(1+p_1)}}((1+p_1)i+p_2j+p_3k) \otimes e^{it}\] and the fiber of $P^{'}$ is given by \[\frac{1}{\sqrt{2(1+p_1^{'})}}((1+p_1^{'})i+p_2^{'}j+p_3^{'}k) \otimes e^{it^{'}}\], where $t$ and $t^{'}$ are the respective parameters. 
The square of the euclidean distance ($D$) between any two points of the fibers is given by the $l_2$ norm as below
\begin{align*}
    D &= (-k*\sin(t)(1+p_1)+k^{'}sin(t^{'})(a+p_1^{'}))^2 \\
    &= (k*\cos(t)(1+p_1)-k^{'}sin(t^{'})(a+p_1^{'}))^2 \\
    &= (k*(\cos(t)p_2+\sin(t)p_3)-k^{'}(cos(t^{'})p_2^{'}+sin(t^{'})p_3^{'}))^2 \\
    &= (k*(\cos(t)p_3-\sin(t)p_2)-k^{'}(cos(t^{'})p_3^{'}+sin(t^{'})p_2^{'}))^2 \\
\end{align*}
We then proceed by taking the derivative of $D$ with respect to $t$ and $t^{'}$, equating them with $0$ for the optimal solution. Then, we have \[\frac{\partial{f}}{\partial{t}}=0, \frac{\partial{f}}{\partial{t^{'}}}=0\]
%Then, we have \[\frac{\partial{f}}{\partial{t}}=0\] and \[\frac{\partial{f}}{\partial{t^{'}}}=0\]
Solving for $t$ and $t^{'}$ at $\delta \approx 0$, gives \[t=\frac{\delta p_1(p_3-p_2)}{(1+p_1)(p_3^{2}+p_2^{2})},   t^{'}=\frac{\delta (p_3-p_2)}{(1+p_1)(p_3^{2}+p_2^{2})}\]
%Solving for $t$ and $t^{'}$ at $\delta \approx 0$, gives \[t=\frac{\delta p_1(p_3-p_2)}{(1+p_1)(p_3^{2}+p_2^{2})}\] and \[t^{'}=\frac{\delta (p_3-p_2)}{(1+p_1)(p_3^{2}+p_2^{2})}\]
Substituting these values in $D$ we get $D \leq \mathcal{O}(\delta^{2})$.
\end{proof}
From Theorem \ref{thm_min_dist} we can deduce that the minimum distance between the fibers is a monotone increasing function of the distance between their corresponding points. Therefore:
\begin{corollary}\label{4d_as_exp_as_3d}
For any set of entities, if there is a configuration satisfying the relations between them in $S^2$- then there will also be a configuration on the fibers that satisfy the concerning relations.
\end{corollary}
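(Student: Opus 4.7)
The plan is to reduce the corollary to a direct application of Theorem \ref{thm_min_dist} together with the definition of the inverse Hopf map. First, I would fix notation: a \emph{configuration} in $S^2$ is a choice of points $\{P_i\} \subset S^2$, one per entity, together with the relational quaternions $\{r_{ij}\}$ from Theorem \ref{thm_so3}; the configuration \emph{satisfies} the relations if for every asserted triple the rotation $R_{r_{ij}}$ sends $P_i$ to $P_j$, i.e., $\lVert R_{r_{ij}}(P_i) - P_j \rVert = 0$. The goal is to produce a configuration on fibers in $S^3$ with the same property.

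Next, I would construct the candidate $S^3$-configuration by applying the inverse Hopf map pointwise: set $F_i = M^{-1}(P_i) \subset S^3$, which is well-defined and non-empty for every $P_i \in S^2$ by Lemma \ref{hopf_lemma}. The relational operator is lifted to $S^3$ via the same unit quaternion $r_{ij}$ acting on the fiber circle by left Hamilton multiplication; by construction of $M$ this action is compatible with $R_{r_{ij}}$ on the base, so that $M(r_{ij} \otimes F_i) = R_{r_{ij}}(M(F_i)) = R_{r_{ij}}(P_i)$.

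The key step is to invoke Theorem \ref{thm_min_dist}: the minimum Euclidean distance between any two Hopf fibers is bounded above by $\delta^2$, where $\delta$ is the distance between the corresponding base points in $S^2$. Applying this with $\delta = \lVert R_{r_{ij}}(P_i) - P_j \rVert$ for each triple, the hypothesis $\delta = 0$ forces the minimum distance between the lifted fiber $r_{ij} \otimes F_i$ and the target fiber $F_j$ to be zero as well; equivalently, the two fibers coincide set-wise, witnessing that the relation is satisfied on the fibers. For the more general case in which the configuration holds only up to tolerance $\delta$, the bound $\delta^2$ combined with the monotonicity established just before the corollary shows that the induced fiber configuration satisfies the relations at least as tightly, which is all that is required for expressiveness.

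The main obstacle I anticipate is the ambiguity in lifting an $SO(3)$ relational rotation to the $S^3 = SU(2)$ level, since the covering map is two-to-one; I would handle this by noting that the paper already represents each relation directly as a unit quaternion rather than as an abstract rotation, so the lift is canonical and there is no choice to make. The rest is bookkeeping: applying Theorem \ref{thm_min_dist} once per triple and taking the union of the fiber assignments over all entities yields the required configuration in $S^3$, completing the argument.
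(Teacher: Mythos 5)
Your proposal is correct and follows essentially the same route as the paper, which derives the corollary directly from Theorem~\ref{thm_min_dist} via the monotonicity of the minimum inter-fiber distance in the base-point distance; you simply make the fiber construction and the equivariance of the lift explicit. One small imprecision: zero minimum distance between two fibers does not by itself imply they coincide set-wise, but here it is immaterial since $\delta=0$ forces the base points to be equal, so the fibers are preimages of the same point and hence identical.
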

Hence, from Corollary  \ref{4d_as_exp_as_3d}, the triples expressed in the lower dimension are the same as in the higher dimension after the mapping. Now we want to show that the higher dimensional space can represent more triples. We state the corresponding Lemma: 
\begin{lemma}\label{lemma_neq}
The triple set represented by rotations in $S^2$ is not equal to the set of triples that can be represented in $S^3$ after the fibration. 
\end{lemma}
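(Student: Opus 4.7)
The plan is to prove Lemma \ref{lemma_neq} by showing strict inclusion. The $\supseteq$ direction is given by Corollary \ref{4d_as_exp_as_3d}, which guarantees that every triple set realizable by $SO(3)$ rotations in $S^2$ lifts to a satisfying configuration on the fibers in $S^3$. What remains is to exhibit a triple set that is realizable in the fiber model but unrealizable by any choice of $S^2$ embeddings and rotations.

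The key observation is that the Hopf map $M \colon S^3 \to S^2$ is many-to-one: each point of $S^2$ has an entire circle of preimages (Lemma \ref{hopf_lemma}), so entities that would collapse to the same point under an $S^2$ embedding can still be separated in $S^3$ by their fiber parameter $t$. I would exploit this by constructing a small triple set whose satisfiability forces two entities to share an $S^2$ projection while simultaneously requiring them to be sent by a single rotation to two distinct tails. Concretely, take two heads $e_1, e_2$, two distinct tails $t_1 \neq t_2$, and a relation $r$, and consider $T = \{(e_1, r, t_1),(e_2, r, t_2)\}$ augmented by auxiliary triples (for instance, using an identity-like relation $s$) that force $M(e_1) = M(e_2)$ in every $S^2$ model. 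In $S^2$ the rotation $R_r$ would send the common projection to a single image, forcing $t_1 = t_2$ and contradicting the assumption. In $S^3$, on the other hand, $e_1, e_2$ can be placed at distinct fiber-parameter values on a shared fiber, and the quaternion rotation associated with $r$ can be arranged so that each $e_i$ is sent to the corresponding $t_i$.

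The main obstacle is making the forcing step watertight: the auxiliary triples must pin down $M(e_1) = M(e_2)$ in every $S^2$ configuration while remaining satisfiable in $S^3$. I would handle this by invoking the rigidity of the $SO(3)$ action on $S^2$ (a sufficiently rich chain of rotation constraints admits only a discrete family of solutions once a reference frame is fixed) and verifying that the extra fiber parameter in $S^3$ still leaves room to separate the entities. The remainder is routine: confirming that the $S^3$ construction respects the Hopf-map structure used in Lemma \ref{hopf_lemma} and that the HopfE scoring function is sensitive to the fiber parameter, so that the two tails receive distinct high scores.
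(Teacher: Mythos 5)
Your overall strategy is the same as the paper's: exhibit a single witness configuration of (positive and negative) triples that forces two or more entities to collapse to the same point in every $S^2$ model, derive a contradiction there, and then separate the collapsed entities in $S^3$ by placing them at distinct fiber parameters. However, the heart of the proof --- the concrete witness set and the argument that it actually forces the collapse --- is exactly the part you leave open. Your proposed mechanism, ``auxiliary triples (for instance, using an identity-like relation $s$) that force $M(e_1)=M(e_2)$,'' does not work as stated: relations in this model are free parameters, so there is no way to stipulate that $s$ acts as the identity. A positive triple $(e_1,s,e_2)$ only asserts $R_s(E_1)=E_2$ for \emph{some} rotation $R_s$, which never forces $E_1=E_2$; even adding $(e_2,s,e_1)$ only forces $R_s^2$ to fix $E_1$, which is satisfied by any half-turn about an axis bisecting $E_1$ and $E_2$, leaving $E_1\neq E_2$ possible. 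The subsequent appeal to ``rigidity of the $SO(3)$ action'' is too vague to close this hole, so as written the unrealizability-in-$S^2$ half of the argument is not established.

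The paper closes this gap with a self-contained witness that needs no auxiliary relation: three entities $e_1,e_2,e_3$ and one relation $r_1$ with positives $(e_1,r_1,e_2),(e_2,r_1,e_3),(e_1,r_1,e_3)$ and the three reversed triples negative. The two positives $R_1(E_1)=E_2$ and $R_1(E_1)=E_3$ force $E_2=E_3$, and chaining with $R_1(E_2)=E_3$ forces $E_1=E_2=E_3$ to lie on the rotation axis; but then the rotation fixes all three representations, so the negative triples are also satisfied --- a contradiction. In $S^3$ the coinciding fibers are then separated by complex phases $e^{i\theta_1},e^{i\theta_2},e^{i\theta_3}$ with a phase rotation $e^{i\theta}$, $\theta_2-\theta_1=\theta_3-\theta_2=\theta$, realizing the positives and excluding the negatives. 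If you want to salvage your two-head/two-tail variant, you must replace the ``identity-like relation'' by a gadget of this cyclic kind; otherwise adopt the three-entity construction directly. Note also that your $S^3$ half should say explicitly what extra degree of freedom the relation acts on along the fiber (the phase factor), since without that the collapsed entities would be transformed identically in $S^3$ as well.
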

\begin{proof}
Let $\tau^2$ be the set of triples that can be represented in $S^2$ and $\tau^3$ be the set of triples that can be expressed in $S^3$ after the Fibration(mapping). We have to show that $\tau^2 \nsubseteq \tau^3$. For this, it is enough to show that even a single configuration of triples in $\tau^3$ is not in $\tau^2$. 
Consider three entities $e_1$, $e_2$ and $e_3$ with relation $r_1$ existing between them such that ${(e_1,r_1,e_2),(e_2,r_1,e_3),(e_1,r,e_3)} \in \tau^{+}$ and ${(e_3,r_1,e_2),(e_2,r_1,e_1),(e_3,r_1,e_1)} \in \tau^{-}$ where $\tau^{+}$ and $\tau^{-}$ are the set of positive and negative triples respectively. Let $E_1, E_2, E_3 \in R^3$ be the representations of these entities in $S^2$ and let $R_1 \in H$ ($H$ is Hamiltonian Space) be the relation quaternion. Thus we have,

%\begin{align*}
% $ R_1 \otimes E_1 \otimes \overline{R_1} = E_2; R_1 \otimes E_2 \otimes \overline{R_1}= E_3;   R_1 \otimes E_1 \otimes \overline{R_1} = E_3$
$ R_1 \otimes E_1 \otimes R_1^{-1} = E_2; R_1 \otimes E_2 \otimes R_1^{-1}= E_3;   R_1 \otimes E_1 \otimes R_1^{-1} = E_3$
%\end{align*}

%\begin{align*}
 %   R \otimes E_1 \otimes \overline{R} &= E_2 \\
  %  R \otimes E_2 \otimes \overline{R} &= E_3 \\
   % R \otimes E_1 \otimes \overline{R} &= E_3 \\
%\end{align*}
This implies that $E_1=E_2=E_3$ and the axis of rotation is along these vectors. However, then we would also have: 

% $R_1 \otimes E_2 \otimes \overline{R_1} = E_1;  R_1 \otimes E_3 \otimes \overline{R_1} = E_2;   R_1 \otimes E_3 \otimes \overline{R_1} = E_1$
$R_1 \otimes E_2 \otimes R_1^{-1} = E_1;  R_1 \otimes E_3 \otimes R_1^{-1} = E_2;   R_1 \otimes E_3 \otimes R_1^{-1} = E_1$

%\begin{align*}
 %   R \otimes E_2 \otimes \overline{R} &= E_1 \\
  %  R \otimes E_3 \otimes \overline{R} &= E_2 \\
   % R \otimes E_3 \otimes \overline{R} &= E_1 \\
%\end{align*}
which is a contradiction, so the relational operator in $S^2$ cannot represent the triples' set. In $S^3$, we have each of these entities' fibers coinciding with each other from the above analysis. We could now select $A_1^1, A_1^2 \in C$, $A_2^1, A_2^2 \in C$ and $A_3^1 \in C$ as the attributes for $e_1$, $e_2$ and $e_3$ respectively. We define $R^c(\in C)=e^{i\theta}$ to be the rotational operator in the complex plane $C$ for the relation. Setting $A_1^1=e^{i\theta_1}$, $A_1^2=A_2^1=e^{i\theta_2}$ and $A_2^2=A_3^1=e^{i\theta_3}$, where $\theta_2-\theta_1=\theta_3-\theta_2=\theta$, we can see that the set of triples (both positive and negative) can be represented by the mapping in $S^3$.
\end{proof}
\begin{theorem}\label{lemma_strictsubset}
The set of triples that can be represented by rotations in $S^2$ is a strict subset of the set of triples that can be represented in $S^3$ after the fibration. 
\end{theorem}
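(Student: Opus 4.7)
The plan is direct: the theorem is really a bookkeeping step that combines the inclusion provided by Corollary \ref{4d_as_exp_as_3d} with the non-equality established in Lemma \ref{lemma_neq}. First I would appeal to Corollary \ref{4d_as_exp_as_3d} to obtain $\tau^2 \subseteq \tau^3$: given any configuration of entity vectors in $S^2$ realizing a prescribed set of triples under some $SO(3)$ rotation, the inverse Hopf map sends each point to its fiber in $S^3$, and Theorem \ref{thm_min_dist} guarantees that distances between fibers are controlled by distances between their base points. Consequently, the same relational rotation (now acting on the fibers) still realizes every positive and negative triple of the original specification, so every triple set expressible in $S^2$ is also expressible after fibration.

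Second, I would invoke Lemma \ref{lemma_neq} to exhibit a witness $t^{\star} \in \tau^3 \setminus \tau^2$. The lemma already constructs such a witness explicitly: three entities $e_1,e_2,e_3$ and a single relation $r_1$ with a mixed positive/negative specification that forces $E_1=E_2=E_3$ in $S^2$ and then immediately contradicts the prescribed negative triples, yet is realizable in $S^3$ by choosing distinct attribute phases $e^{i\theta_j}$ along the (coinciding) fibers and interpreting $r_1$ as the complex rotation $e^{i\theta}$ acting on the fiber parameter. Pairing this witness with the inclusion above yields $\tau^2 \subsetneq \tau^3$, which is exactly the assertion of the theorem.

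The only subtlety to flag is terminological consistency: the notion of "representable in $S^3$ after the fibration" must be interpreted so as to include the attribute-phase degree of freedom along each fiber (used in Lemma \ref{lemma_neq}), while "representable by rotations in $S^2$" is restricted to the structural $SO(3)$ action (used in Corollary \ref{4d_as_exp_as_3d}). Once both notions are aligned in this way, there is no additional geometric content to prove; the hardest part, namely producing the separating configuration, has already been handled inside Lemma \ref{lemma_neq}, and the argument here reduces to a short two-line composition of the preceding results.
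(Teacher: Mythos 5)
Your proposal matches the paper's proof exactly: both obtain $\tau^2 \subseteq \tau^3$ from Corollary \ref{4d_as_exp_as_3d} and then invoke the separating witness of Lemma \ref{lemma_neq} to conclude the inclusion is strict. If anything, your statement of the second ingredient (a witness in $\tau^3 \setminus \tau^2$) is cleaner than the paper's phrasing, which writes the non-equality as $\tau^2 \nsubseteq \tau^3$ where it plainly means $\tau^3 \nsubseteq \tau^2$.
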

\begin{proof}
Let $\tau^2$ be the set of triples that can be represented in $S^2$ and $\tau^3$ be the set of triples that can be expressed in $S^3$ after the fibration(mapping). We have to show that $\tau^2 \subset \tau^3$. From \ref{4d_as_exp_as_3d} we have $\tau^2 \subseteq \tau^3$ and from \ref{lemma_neq} we have $\tau^2 \nsubseteq \tau^3$. Thus, we can deduce that $\tau^2 \subset \tau^3$ and hence the proof.
\end{proof}
%%%%%%%%%%%%%%%%%%%%%%%%%%%%%%%%%%%%%%%%%%%%%%%%%%%%%%%%%%%%%%%%%%%%%%%%%%%%%%%%%%%%%%%%%%%%%%%%%%%%%%%%%%%%%%%%%%%%%%%%%%%%%%%%%%%%%%%%%%%%%%%%%%%%%%%%%%%%%%%%%%%%%%%%%%%%%%%%%%%%%%%%%%%%%%%%%%%%%%%%%%%%%%%%%%%%%%%%%%%%%%%%%%%%%%%%%%%%%%%%%%%%%%%%%%%%%%%%%%%%%%%%%%%%%%%%%%%%%%%%%%%%%%%%%%%%%%%%%%%%%%%%%%%%%%%%%%%%%%%%%%%%%%%%%%%%%%%%%%%%%%%%%%%%%%%%%%%%%%%%%%%%%%%%%%%%%%%%%%%%%%%%%%%%%%%%%%%%%%%%%%%%%%%%%%%%%%%%%%%%%%%%%%%%%%%%%%%%%%%%%%%%%%%%%%%%%%%%%%%%%%%%%%%%%%%%%%%%%%%%%%%%%%%%%%%%%%%%%%%%%%%%%%%%%%%%%%%%%%%%%%%%%%%%%%%%%%%%%%%%%%%%%%%%%%%%%%%%%%%%%%%%%%%%%%%%%%%%%%%%%%%%%%%%%%%%%%%%%%%%%%%%%%%%%%%%%%%%%%%%%%%%%%%%%%%%%%%%%%%%%%%%%%%%%%%%%%%%%%%%%%%%%%%%%%%%%%%%%%%%%%%%%%%%%%%%%%%%%%%%%%%%%%%%%%%%%%%%%%%%%%%%%%%%%%%%%%%%%%%%%%%%%%%%%%%%%%%%%%%%%%%%%%%%%%%%%%%%%%%%%%%%%%%%%%%%%%%%%%%%%%%%%%%%%%%%%%%%%%%%%%%%%%%%%%%%%%%%%%%%%%%%%%%%%%%%%%%%%%%%%%%%%%%%%%%%%%%%%%%%%%%%%%%%%%%%%%%%%%%%%%%%%%%%%%%%%%%%%%%%%%%%%%%%%%%%%%%%%%%%%%%%%%%%%%%%%%%%%%%%%%%%%%%%%%%%%%%%%%%%%%%%%%%%%%%%%%%%%%%%%%%%%%%%%%%%%%%%%%%%%%%%%%%%%%%%%%%%%%%%%%%%%%%%%%%%%%%%%%%%%%%%%%%%%%%%%%%%%%%%%%%%%%%%%%%%%%%%%%%%%%%%%%%%%%%%%%%%%%%%%%%%%%%%%%%%%%%%%%%
%%%%%%%%%%%%%%%%%%%%%%%%%%%%%%%%%%%%%%%%%%%%%%%%%%%%%%%%%%%%%%%%%%%
%%%%%%%%%%%%%%%%%%%%%%%%%%%%%%%%%%%%%%%%%%%%%%%%%%%%%%%%%%%%%%%%%%%
%%%%%%%%%%%%%%%%%%%%%%%%%%%%%%%%%%%%%%%%%%%%%%%%%%%%%%%%%%%%%%%%%%%
\section{Our Approach} \label{sec:method}
\subsection{Modeling rotations using Quaternions} \label{sec:approach-rotation}
As noted in the Theorem \ref{thm_so3}, the unit quaternion is isomorphic to the $SU(2)$ group which in turn is homomorphic to the $SO(3)$ rotation group (also observed by NagE \cite{yang2020nage}). This allows us to use the quaternion for rotations in the 3D Euclidean space as below:
\begin{equation}  \label{rotatedentity}
    w^{'} = q \otimes w \otimes \overline{q}
\end{equation}
where $q$ is the quaternion, $\overline{q}$ is the inverse of $q$, $w$ is the vector in 3D space and $w^{'}$ is the rotated vector. In other words,  $q$ is the dimension-wise relation quaternion, $w$ is the dimension-wise entity representation in 3D space, and $w^{'}$ is the rotated entity representation in the same space. 
If $q = a + bi + cj + dk$ then $\overline{q} = \frac{a - bi -cj -dk}{\sqrt{a^2 + b^2 + c^2 + d^2}}$.
%We use the relation quaternion on the entity vectors per dimension for the rotation in three-dimensional space exploited in the previous literature. 

Reusing $SO(3)$ rotations from NagE in this step helps us inherit its geometric for the completion of our approach.
% Mention the SO3 rotation equations above
% define relational rotation
%%%%%%%%%%%%%%%%%%%%%%%%%%%%%%%%%%%%%%%%%%%%%%%%%%%%%%%%%%%%%%%%%%%%%%%%%%%%%%%%%%%%%%%%%%%%%%%%%%%%%%%%%%%%%%%%%%%%%%%%%%%%%%%%%%%%%%%%%%%%%%%%%%%%%%%%%%%%%%%%%%%%%%%%%%%%%%%%%%%%%%%%%%%%%%%%%%%%%%%%%
%%%%%%%%%%%%%%%%%%%%%%%%%%%%%%%%%%%%%%%%%%%%%%%%%%%%%%%%%%%%%%%%%%%
%%%%%%%%%%%%%%%%%%%%%%%%%%%%%%%%%%%%%%%%%%%%%%%%%%%%%%%%%%%%%%%%%%%
%%%%%%%%%%%%%%%%%%%%%%%%%%%%%%%%%%%%%%%%%%%%%%%%%%%%%%%%%%%%%%%%%%%
%%%%%%%%%%%%%%%%%%%%%%%%%%%%%%%%%%%%%%%%%%%%%%%%%%%%%%%%%%%%%%%%%%%
\begin{table*}[htb]
\small 
    \centering
    \begin{tabular}{p{2.0cm}p{6.0cm}p{6.0cm}p{1.5cm}}
        %\Cline{1-4}
        \toprule
        %& & micro & \\
        % & & & \\
        % Model & \multicolumn{5}{c|}{FB15K237}& \multicolumn{5}{c}{YAGO3-10} \\
        Model & Scoring function & Parameters & $\mathcal{O}_{time}$ \\
        %\Cline{1-4}
        \midrule
        TransE \cite{bordes2013translating}& $\lVert E_h + W_r - E_t \rVert$ &  $E_h, W_r, E_t \in R^k$ & $\mathcal{O}(k)$ \\
      %  HolE& $\langle W_r, Qh \star Qt \rangle$ & $Qh, Wr, Qt \in R^k$ & $\mathcal(klog(k))$ \\
        DistMult \cite{DBLP:journals/corr/YangYHGD14a}& $\langle W_r, E_h, E_t \rangle$ & $E_h, W_r, E_t \in R^k$ & $\mathcal{O}(k)$ \\
        ComplEx \cite{trouillon2016complex}& $Re(\langle W_r, E_h, \overline{E_t} \rangle)$ & $E_h, W_r, E_t \in C^k$ & $\mathcal{O}(k)$ \\
        RotatE \cite{sun2018rotate}& $\lVert E_h \odot Wr - E_t \rVert$ & $E_h, W_r, E_t \in C^k, \mid W_r \mid = 1$ & $\mathcal{O}(k)$ \\
        QuatE \cite{DBLP:conf/nips/0007TYL19}& $E_h \otimes W_r \cdot E_t$ & $E_h, W_r, E_t \in H^k$ & $\mathcal{O}(k)$ \\
        \midrule
        HopfE (ours)& $\lVert M(W_r \otimes E_h \otimes \overline{W_r}, A^h) - M(E_t, A^t)\rVert$ & $E_h, E_t \in (R^3)^k, W_r \in H^k, A^h,A^t \in C^k$ & $\mathcal{O}(k)$ \\
        \bottomrule
    \end{tabular}
    \caption{KGE Overview. $W_r$ is the relation representation, $E_h$, $E_t$ are the entity representations, $M$ is the inverse Hopf map. }
    \label{tab:tab-complexity}
        \vspace{-2mm}
\end{table*}
\subsection{Homotopy Lifting using Inverse Hopf Map}\label{sec:approach-hopf}
In this step, 
%we focused on maintaining geometric \textit{interpretability}. 
we aim to achieve \textit{expressiveness} in higher dimension space, i.e., 4D hypersphere as our main contribution. The inverse map with the equations described in Lemma \ref{hopf_lemma} lifts the point (rotated entity representation from Equation \ref{rotatedentity}) in $S^2$ to a set of points (i.e., fiber) lying in a higher-dimensional space. Once both entities are mapped as individual fibers in the 4D hypersphere, it is possible to ground connectivity patterns in rotations as done by NagE and RotatE in the lower dimensional spaces. Each point on the fiber has the same geometric property for a given entity. We parametrize the entities to fix the location of these sets of points on the fiber. Hence, we now have a bunch of points on the fiber for each entity, which could represent the entity's semantic attributes. Formally we define a set of parameters $t$ for each entity corresponding to its semantic attribute $A^e$ (cf., section \ref{sec:problem}). These parameters could be used to find the set of points on the fiber as below:
\begin{equation}\label{eq_parameter}
    e_i = r' \otimes (\cos(t) + \sin(t)i)
\end{equation}
where $e_i$ is the entity quaternion per dimension and $r'$ is obtained from either Equations \ref{eq_inv_hopf_map_1} or \ref{eq_inv_hopf_map_2}. From Corollary \ref{4d_as_exp_as_3d}, we obtain that the equation \ref{eq_parameter}, which takes the representation to a higher dimension, does not cause a loss in expressivity.
\begin{claim}\label{claim_nclique}
Translation methods, in the euclidean space, require at least $n-1$ dimensions to model an n-clique relational pattern between $n$ entities.
\end{claim}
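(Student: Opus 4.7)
The plan is to interpret an $n$-clique relational pattern as a set of $n$ pairwise distinct entities $e_1, \ldots, e_n \in \mathbb{R}^k$ together with the relation(s) that must be realized simultaneously under the translation scoring rule $e_h + r = e_t$, and to derive a lower bound on the embedding dimension $k$ by an affine-independence argument.

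First, I would rewrite each triple of the clique in its translation form, obtaining a linear system over the entity embeddings and the relation vectors. Using translation invariance, I would fix $e_1 = 0$ as a reference, so that every relation vector appearing in the clique is expressible as a difference $e_j - e_i$ of entity embeddings. This reduces the question to analyzing the affine span of $\{e_1, \ldots, e_n\}$, or equivalently the linear span of the $n-1$ vectors $\{e_i - e_1\}_{i=2}^{n}$.

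The core of the argument is to show that, in order for the clique's positive triples to be realized without distinct edges collapsing onto the same translation vector, the $n$ embeddings must be affinely independent. I would proceed by contradiction: an affine dependency $\sum_i \lambda_i e_i = 0$ with $\sum_i \lambda_i = 0$ and not all $\lambda_i$ zero would yield a nontrivial identity among difference vectors $e_j - e_i$, which forces two structurally different edges of the clique to be represented by the same translation—contradicting the defining combinatorial structure of the clique. Since $n$ affinely independent points require ambient dimension at least $n - 1$, the lower bound follows, and a matching construction using the vertices of the standard $(n-1)$-simplex in $\mathbb{R}^{n-1}$ shows the bound is tight.

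The main obstacle will be cleanly tying the affine dependency to a verifiable failure of the translation model. A careless argument would allow collinear or otherwise degenerate placements to still realize many triples through accidental coincidences of differences. To circumvent this, I would explicitly invoke the clique's defining property that each pair $(e_i, e_j)$ contributes a distinct, structurally meaningful edge, so that any coincidence of difference vectors forced by an affine dependency immediately violates the clique's pairwise distinctness. The remainder of the argument—translating "affine independence of $n$ points" into "ambient dimension at least $n-1$"—is then a standard linear-algebraic fact.
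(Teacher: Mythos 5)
The paper states Claim~\ref{claim_nclique} without any proof and without a formal definition of an ``$n$-clique relational pattern,'' so there is no reference argument to compare against; your proposal has to stand on its own, and its central step does not hold. The pivot of your argument is the implication ``affine dependence among $e_1,\dots,e_n$ implies that two distinct edges of the clique are forced to share the same translation vector.'' This is false: an affine dependency $\sum_i \lambda_i e_i = 0$ with $\sum_i \lambda_i = 0$ only yields a linear identity among difference vectors with arbitrary real coefficients, not an equality of two differences. Concretely, take $e_1 = 0$, $e_2 = 1$, $e_3 = 3$ on the real line: the three points are affinely dependent, yet all ordered pairwise differences ($\pm 1, \pm 2, \pm 3$) are distinct, so no two edges collapse onto the same translation and, for every edge relation $r_{ij} = e_j - e_i$, no unintended pair $(e_k, e_l)$ satisfies $e_k + r_{ij} = e_l$. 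More generally, a Sidon set of $n$ integers places $n$ entities in one dimension with all $n(n-1)$ ordered differences distinct. Under your own success criterion --- distinct edges must receive distinct, non-colliding translation vectors --- the $n$-clique is therefore realizable in dimension one, so that criterion cannot imply a lower bound of $n-1$, and the contradiction you hope to extract from an affine dependency simply is not there.

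The deeper problem is that the truth of the claim hinges entirely on what ``model'' means, and your proposal fixes a semantics under which the claim is false rather than merely hard to prove. The alternatives are not obviously better: if a \emph{single} relation $r$ must realize all ordered pairs exactly ($e_i + r = e_j$ for all $i \neq j$), the system collapses to $r=0$ and $e_1=\dots=e_n$ in \emph{every} dimension, making the bound vacuous; a margin-based semantics (every positive triple among the $n$ entities scores strictly better than every negative one) is the only reading under which a nontrivial dimension bound could plausibly emerge, and there the affine independence of the embeddings would have to be \emph{derived} from the separation constraints, not asserted to follow from ``edges are combinatorially distinct.'' Until the statement's semantics is pinned down and that derivation is supplied, the proof is incomplete; your tightness construction via the standard $(n-1)$-simplex is likewise moot, since under the semantics you actually use, one dimension already suffices.
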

A direct consequence of Claim \ref{claim_nclique} is that to model dense relations; we would need a dimension proportionate to the number of entities. We propose an increase in the model expressivity by increasing the entity attribute size while keeping the dimensions same. The entity attribute size (i.e. number of heads) is $\mid A^{e} \mid$. While increasing the number of heads will make the model more expressive, it may also increasing the training latency. 
In ablation results, we aim to study the relation between model expressivity and the attribute heads for a fixed time interval (cf., section \ref{sec:heads}).
%%%%%%%%%%%%%%%%%%%%%%%%%%%%%%%%%%%%%%%%%%%%%%%%%%%%%%%%%%%%%%%%%%%
%%%%%%%%%%%%%%%%%%%%%%%%%%%%%%%%%%%%%%%%%%%%%%%%%%%%%%%%%%%%%%%%%%%
%%%%%%%%%%%%%%%%%%%%%%%%%%%%%%%%%%%%%%%%%%%%%%%%%%%%%%%%%%%%%%%%%%%
%%%%%%%%%%%%%%%%%%%%%%%%%%%%%%%%%%%%%%%%%%%%%%%%%%%%%%%%%%%%%%%%%%%%%%%%%%%%%%%%%%%%%%%%%%%%%%%%%%%%%%%%%%%%%%%%%%%%%%%%%%%%%%%%%%%%%%%%%%%%%%%%%%%%%%%%%%%%%%%%%%%%%%%%%%%%%%%%%%%%%%%%%%%%%%%%%%%%%%%%%
\subsection{Inducing Semantics} \label{sec:semantics}
%From Lemma \ref{hopf_lemma}, the inverse Hopf Map would take us from a point in $S^2$ to a circular fiber in $S^3$.
%Homotopy lifting using inverse Hopf Fibration in the previous section would take us from a point in $S^2$ to a circular fiber in $S^3$. 
Once the point is mapped to the fiber using inverse Hopf Fibration, we propose to learn a set of points that would inform about the various semantics/attributes of the entity. One possible way for the operation would be to have the points spread out on the circle (fiber) based on the semantic attributes of the entity. Taking inspiration from \cite{gesese2019survey,zhao2020connecting} we use the textual literals widely available for an entity in the standard KGs namely: entity description, Instance of (type of entity such as human), Alias (alternative names) and Surface forms (label of the entity). Specifically we use the CBOW model \cite{mikolov2013efficient} to aggregate the embeddings of the tokens in each semantic attribute. These embeddings could be initialized from a standard model such as Glove \cite{pennington2014glove}. The parameter $t$ in Equation \ref{eq_parameter} is learned from the semantics. We also tried with other methods such as initializing the parameters with the semantic embeddings and using an MLP to learn the final representation after concatenating the structural and semantic embeddings, which does not perform better.
%%%%%%%%%%%%%%%%%%%%%%%%%%%%%%%%%%%%%%%%%%%%%%%%%%%%%%%%%%%%%%%%%%%
%%%%%%%%%%%%%%%%%%%%%%%%%%%%%%%%%%%%%%%%%%%%%%%%%%%%%%%%%%%%%%%%%%%%%%%%%%%%%%%%%%%%%%%%%%%%%%%%%%%%%%%%%%%%%%%%%%%%%%%%%%%%%%%%%%%%%%%%%%%%%%%%%%%%%%%%%%%%%%%%%%%%%%%%%%%%%%%%%%%%%%%%%%%%%%%%%%%%%%%%%%%%%%%%%%%%%%%%%%%%%%%%%%%%%%%%%%%%%%%%%%%%%%%%%%%%%%%%%%%%%%%%%%%%%%%%%%%%%%%%%%%%%%%%%%%%%%%%%%%%%%%%%%%%%%%%%%%%%%%%%%%%%%%%%%%%%%%%%%%%%%%%%%%%%%%%%%%%%%%%%%%%%%%%%%%%%%%%%%%%%%%%%%%%%%%%%%%%%%%%%%%%%%%%%%%%%%%%%%%%%%%%%%%%%%%%%%%%%%%%%%%%%%%%%%%%%%%%%%%%%%%%%%%%%
%%%%%%%%%%%%%%%%%%%%%%%%%%%%%%%%%%%%%%%%%%%%%%%%%%%%%%%%%%%%%%%%%%%%%%%%%%%%%%%%%%%%%%%%%%%%%%%%%%%%%%%%%%%%%%%%%%%%%%%%%%%%%%%%%%%%%%
%%%%%%%%%%%%%%%%%%%%%%%%%%%%%%%%%%%%%%%%%%%%%%%%%%%%%%%%%%%%%%%%%%%
\begin{table*}[!htb]
    \centering
    \begin{tabular}{p{2.5cm}|p{1.0cm}p{1.0cm}p{1.0cm}p{1.0cm}p{1.0cm}|p{1.0cm}p{1.0cm}p{1.0cm}p{1.0cm}p{1.0cm}}
        %\Cline{1-4}
        \toprule
        %& & micro & \\
        % & & & \\
        Model & \multicolumn{5}{c|}{WN18}& \multicolumn{5}{c}{WN18RR} \\
         & MR & MRR & Hits@1 & Hits@3 & Hits@10 & MR & MRR & Hits@1 & Hits@3 & Hits@10 \\
        %\Cline{1-4}
        \midrule
        TransE \cite{bordes2013translating}&  -&  0.495& 0.113& 0.888& 0.943& 3384& 0.226& -& -& 0.501  \\
        DistMult \cite{DBLP:journals/corr/YangYHGD14a}& 655& 0.797& -& -& 0.946& 5110& 0.43& 0.39& 0.44& 0.49  \\
        ComplEx \cite{trouillon2016complex} & -& 0.941& 0.936& 0.945& 0.947& 5261& 0.44& 0.41& 0.46& 0.51   \\ 
        RotatE \cite{sun2018rotate}& 309& 0.949& \textbf{0.944}& 0.952& 0.959& 3340& 0.476& 0.428& 0.492& 0.571   \\
        NagE \cite{yang2020nage}& -& \textbf{0.950}& \textbf{0.944}& \underline{0.953}& \underline{0.960}& -& 0.477& 0.432& \underline{0.493}& 0.574  \\
        QuatE \cite{DBLP:conf/nips/0007TYL19}& 388& \underline{0.949}&\underline{0.941}& \textbf{0.954}& \underline{0.960}& 3472& \underline{0.481}& \textbf{0.436}&\textbf{0.500}& 0.564  \\
        \midrule
        HopfE & \underline{245}& \underline{0.949}& 0.938& \textbf{0.954}& \underline{0.960}&\underline{2885} & 0.472& 0.413&\textbf{0.500}& \textbf{0.586} \\
        HopfE + Semantics & \textbf{99}& 0.945& 0.934& 0.951& \textbf{0.961}& \textbf{2884}& \textbf{0.482}& \underline{0.433}& \textbf{0.500}& \underline{0.579}  \\
        \bottomrule
    \end{tabular}
    \caption{Evaluation metrics on the WN18 and WN18RR datasets. Best results are in bold and second best is underlined.}
    \label{tab:tab-wn}
        \vspace{-2mm}
\end{table*} 

\subsection{Score function and Optimization Objective} \label{sec:approach-optimization}
In previous sections, we learned that lifting the dimensions of an entity would take us to the fiber (circle) in $S^3$. For two entities we would have two linked circles enriched with entity semantics. Now, we also need to find the distance between the attributes of two entities lying on the linked circles. This could be done, for example, by considering the corresponding attributes of the two entities or by taking a pairwise distance between the all combinations of the attributes. That would be computationally expensive. Hence, we propose another variant where we find an optimal matching between the attributes and then take the pairwise distance between these matched attributes. Table \ref{tab:tab-complexity} provides an holistic overview of KGE details. To find an optimal matching between the set of entity attributes, we use the Wasserstein Distance using Equation \ref{wasserstein}. In practice, we have used the Sinkhorn iterations for faster computation of the Wasserstein distances \cite{cuturi2013sinkhorn}. Let $\Pi$ be the optimal matching between the attributes $\emph{h}_{i}$ $\in$ $\emph{E}_h$ and $\emph{t}_{j}$ $\in$ $\emph{E}_t$ of an entity pair $h$ and $t$ respectively. The score function is\footnote{$\emph{E}_h$ and $\emph{E}_t$ are the multi set corresponding to $e_h$ and $e_t$ in higher dimensions}:
% {\small
 \begin{equation} \label{score_fn}
 \small
    f_r(h,r_{ht},t) = \sum_{[i,j] \in \Pi} \frac{1}{2} \left( \lVert M(\mathcal{O} (\emph{h}_i,r_{ht}))-M(\emph{t}_j)\rVert + \lVert M(\mathcal{O} (\emph{t}_j,\overline{r}))-M(\emph{h}_i)\rVert \right)
    \end{equation}
% }
% \begin{multline}\label{eq_prob_sep_space_attn}
% \small
%     P(r \mid h, t, s) =  softmax(\\\mlp([r_{v_i,v_j}\parallel\vec{e}_h\parallel\vec{e}_t\parallel\vec{e}_h^{~attn}\parallel\vec{e}_t^{~attn}))
% \end{multline}

%\begin{multline}\label{score_fn}
%\small
 %   f_r(h,r_{ht},t) = \sum_{[i,j] \in \Pi} & \frac{1}{2} \left( \lVert Map(\mathcal{O} (\emph{h}_i,r_{ht}))-Map(\emph{t}_j)\rVert + \lVert Map(\mathcal{O} (\emph{t}_j,\overline{r}))-Map(\emph{h}_i)\rVert \right)
%\end{multline}

where $\emph{h}_i$, $\emph{t}_j$ are the attributes (semantic properties) of the head and tail entities respectively, $r_{ht}$ is the relation. $\mathcal{O}$ represents the rotation operation on each dimension of the entity i.e. $\mathcal{O}(e_d,r_d) = r_d \otimes e_d \otimes \overline{r_d}$ where $r_d$ $\in$ $H$ and $e_d \in R^3$. Similarly, $M$ represents the inverse Hopf Map to the fiber in $S^3$, per dimension. In practice, for larger attribute sizes, computing the Wasserstein distance between a set of points could be computationally expensive. Therefore, we define its variant to consider the minimum distance between the two sets' attributes. The score function transforms to:
{\small
\begin{equation}\label{score_fn2}
\small
    f_r(h,r_{ht},t) = \underset{\underset{\emph{t}_{j} \in \emph{E}_t}{\emph{h}_{i} \in \emph{E}_h} }{\inf} \frac{1}{2} \left( \lVert M(\mathcal{O} (\emph{h}_i,r_{ht}))-M(\emph{t}_j)\rVert + \lVert M(\mathcal{O} (\emph{t}_j,\overline{r_{ht}}))-M(\emph{h}_i)\rVert \right)
\end{equation}
}
The loss function aims to minimize the score function for the positive triples and maximize it for the negative triples. More specifically, we use a self adverserial negative sampling loss similar to RotatE \cite{sun2018rotate} as below:
\begin{align*}
\small
    L &= \sum_{h,r_{ht},t \in \tau} -\log \left\{ \sigma(\gamma + f_r(h,r_{ht},t)) \right\} \\
    - & \sum_{h^{-},r_{ht},t^{-} \in \tau^{-}} p(h^{-},r_{ht},t^{-}) \log \left\{ \sigma(-(\gamma + f_r(h^{-},r_{ht}^{-},t^{-})))  \right\}
\end{align*}
where $\tau$ is the set of positive triples and $\tau^{-}$ is the set of negative triples, $\gamma$ is a margin hyperparamter and $p$ is the probability given by the model of the negative sample.

\begin{table}[ht!]
\small
    \centering
    \begin{tabular}{p{1.19cm}|p{0.98cm}p{0.98cm}p{0.80cm}p{0.80cm}p{0.80cm}p{0.80cm}}
        %\Cline{1-4}
        \toprule
        %& & micro & \\
        % & & & \\
        % Model & \multicolumn{5}{c|}{FB15K237}& \multicolumn{5}{c}{YAGO3-10} \\
        Dataset & \#Entities & \#Relations & \% N to N & \% N to 1 & \% 1 to 1 & \% 1 to N \\
        %\Cline{1-4}
        \midrule
        WN18& 40943& 36&  0.00& 59.65& 40.22& 0.12 \\
        WN18RR& 40943& 22& 0.03& 74.59& 25.33& 0.04 \\
        FB15K237& 14541& 474& 4.93& 91.26& 3.63& 0.17 \\
        YAGO310& 123182& 74& 57.12& 41.10& 1.65& 0.12 \\
        \bottomrule
    \end{tabular}
    \caption{Dataset Stats including fraction of relation types.}
    \label{tab:tab-dataset_stats}
        \vspace{-2mm}
\end{table} 
%%%%%%%%%%%%%%%%%%%%%%%%%%%%%%%%%%%%%%%%%%%%%%%%%%%%%%%%%%%%%%%%%%%%%%%%%%%%%%%%%%%%%%%%%%%%%%%%%%%%%%%%%%%%%%%%%%%%%%%%%%%%%%%%%%%%%%%%%%%%%%%%%%%%%%%%%%%%%%%%%%%%%%%%%%%%%%%%%%%%%%%%%%%%%%%%%%%%%%%%%%%%%%%%%%%%%%%%%%%%%%%%%%%%%%%%%%%%%%%%%%%%%%%%%%%%%%%%%%%%%%%%%%%%%%%%%%%%%%%%%%%%%%%%%%%%%%%%%%%%%%%%%%%%%%%%%%%%%%%%%%%%%%%%%%%%%%%%%%%%%%%%%%%%%%%%%%%%%%%%%%%%%%%%%%%%%%%%%%%%%%%%%%%%%%%%%%%%%%%%%%%%%%%%%%%%%%%%%%%%%%%%%%%%%%%%%%%%%%%%%%%%%%%%%%%%%%%%%%%%%%%%%%%%%%%%%%%%%%%%%%%%%%%%%%%%%%%%%%%%%%%%%%%%%%%%%%%%%%%%%%%%%%%%%%%%%%%%%%%%%%%%%%%%%%%%%%%%%%%%%%%%%%%%%%%%%%%%%%%%%%%%%%%%%%%%%%%%%%%%%%%%%%%%%%%%%%%%%%%%%%%%%%%%%%%%%%%%%%%%%%%%%%%%%%%%%%%%%%%%%%%%%%%%%%%%%%%%%%%%%%%%%%%%%%%%%%%%%%%%%%%%%%%%%%%%%%%%%%%%%%%%%%%%%%%%%%%%%%%%%%%%%%%%%%%%%%%%%%%%%%%%%%%%%%%%%%%%%%%%%%%%%%%%%%%%%%%%%%%%%%%%%%%%%%%%%%%%%%%%%%%%%%%%%%%%%%%%%%%%%%%%%%%%%%%%%%%%%%%%%%%%%%%%%%%%%%%%%%%%%%%%%%%%%%%%%%%%%%%%%%%%%%%%%%%%%%%%%%%%%%%%%%%%%%%%%%%%%%%%%%%%%%%%%%%%%%%%%%%%%%%%%%%%%%%%%%%%%%%%%%%%%%%%%%%%%%%%%%%%%%%%%%%%%%%%%%%%%%%%%%%%%%%%%%%%%%%%%%%%%%%%%%%%%%%%%%%%%%%%%%%%%%%%%%%%%%%%%%%%%%%%%%%%%%%%%%%%%%%%%%%%%%%%%%%%%%%%%%%%%%%%%%%%%%%%%%%%%%%
%%%%%%%%%%%%%%%%%%%%%%%%%%%%%%%%%%%%%%%%%%%%%%%%%%%%%%%%%%%%%%%%%%%
%%%%%%%%%%%%%%%%%%%%%%%%%%%%%%%%%%%%%%%%%%%%%%%%%%%%%%%%%%%%%%%%%%%
%%%%%%%%%%%%%%%%%%%%%%%%%%%%%%%%%%%%%%%%%%%%%%%%%%%%%%%%%%%%%%%%%%%

\section{Experiment Setup} \label{sec:experiments}
\textbf{Datasets}: We executed experiments on the widely used benchmark datasets of WN18, WN18RR, FB15k237 and YAGO3-10 \cite{sun2018rotate,toutanova2015observed}. The WN18 dataset is extracted from Wordnet which contains lexical relations between English words. WN18RR removes the inverse relations in the WN18 dataset. FB15k237 was created from the Freebase knowledge base and does not contain inverse relation. The YAGO3-10 dataset is obtained from Wikipedia, Wikidata, and Geonames. Table \ref{tab:tab-dataset_stats} summarizes detailed statistics of all datasets.

\begin{table*}[!htb]
    \centering
    \begin{tabular}{p{2.5cm}|p{1.0cm}p{1.0cm}p{1.0cm}p{1.0cm}p{1.0cm}|p{1.0cm}p{1.0cm}p{1.0cm}p{1.0cm}p{1.0cm}}
        %\Cline{1-4}
        \toprule
        %& & micro & \\
        % & & & \\
        Model & \multicolumn{5}{c|}{FB15K237}& \multicolumn{5}{c}{YAGO3-10} \\
         & MR & MRR & Hits@1 & Hits@3 & Hits@10 & MR & MRR & Hits@1 & Hits@3 & Hits@10 \\
        %\Cline{1-4}
        \midrule
        TransE \cite{bordes2013translating} &  357& 0.294& -& -& 0.465& -& -& -& -& -  \\
         DistMult \cite{DBLP:journals/corr/YangYHGD14a} & 254& 0.241& 0.155& 0.263& 0.419& 5926& 0.340& 0.240& 0.380& 0.540  \\
        ComplEx \cite{trouillon2016complex} & 339& 0.247& 0.158& 0.275& 0.428& 6351& 0.360& 0.260& 0.400& 0.550   \\ 
        RotatE \cite{sun2018rotate} & \underline{177}& 0.338& 0.241& 0.375& \underline{0.533}& 1767& 0.495& 0.402& 0.550& 0.670   \\
        NagE \cite{yang2020nage} & -& \underline{0.340}& \underline{0.244}& \underline{0.378}& 0.530& -& -& -& -& -  \\
        QuatE \cite{DBLP:conf/nips/0007TYL19} & \textbf{176}& 0.311& 0.221& 0.342& 0.495& -& -& -& -& -  \\
        \midrule
        HopfE & 212&\textbf{0.343} & \textbf{0.247}& \textbf{0.379}&\textbf{0.534}& \textbf{1077}& \textbf{0.529}&\textbf{0.438} & \textbf{0.586}&\textbf{0.695} \\
        HopfE + Semantics & 199& 0.330& 0.230& 0.373& 0.528& \underline{1260}& \underline{0.518}& \underline{0.429}& \underline{0.570}& \underline{0.678}  \\
        \bottomrule
    \end{tabular}
    \caption{Evaluation metrics on the FB15K237 and YAGO3-10 datasets. Best results are in bold and second best is underlined.}
    \label{tab:tab-fb_yago}
        \vspace{-2mm}
\end{table*} 
\noindent \textbf{Evaluation Metrics:}
Following the widely adapted metrics, we report the Mean Rank (MR), Mean Reciprocal Rank (MRR) and the cut-off hit ratio (hits@N (N=1,3,10)) metrics on all the datasets. MR
reports the average rank in all the correct entities. MRR is
the average of the inverse rank of the correct entities. H@N evaluates the ratio of correct
entities predictions at a top N predicted results.
\begin{table}[ht!]
    \centering
    \begin{tabular}{p{2.5cm}|p{1.0cm}p{1.0cm}p{1.0cm}p{1.0cm}}
        %\Cline{1-4}
        \toprule
        %& & micro & \\
        % & & & \\
        % Model & \multicolumn{5}{c|}{FB15K237}& \multicolumn{5}{c}{YAGO3-10} \\
        Model & MR & MRR & Hits@1 & Hits@10 \\
        %\Cline{1-4}
        \midrule
        HopfE + Label & 1260& 0.518& 0.429& 0.678 \\
        HopfE + Alias & 1281& \textbf{0.525}& \textbf{0.44}& \textbf{0.684} \\
        HopfE + Instance & \textbf{1215}& 0.518& 0.428&  0.68 \\
        HopfE + Desc & 1264& 0.521& 0.436& 0.677 \\
        HopfE + Numeric & 2261& 0.441& 0.357& 0.598 \\
        \bottomrule
    \end{tabular}
    \caption{Ablation study of inducing individual literals on the YAGO3-10 dataset. Best results are in bold.}
    \label{tab:tab-ablation_semantics}
        \vspace{-2mm}
\end{table} 
\begin{table}[ht!]
    \centering
    \begin{tabular}{p{3.5cm}|p{1.0cm}|p{1.0cm}}
        \toprule
        \textbf{Relation}& \textbf{RotatE}& \textbf{HopfE} \\
        \midrule
        derivationally\_related\_form& 0.947& \textbf{0.956} \\
        also\_see& 0.585& \textbf{0.626} \\
        verb\_group& \textbf{0.943}& \textbf{0.943} \\
        similar\_to& \textbf{1}& \textbf{1} \\
        hypernym& 0.148& \textbf{0.161} \\
        instance\_hypernym& 0.318& \textbf{0.347} \\
        member\_meronym& 0.232& \textbf{0.257} \\
        synset\_domain\_topic\_of& 0.341& \textbf{0.405 }\\
        has\_part& 0.184& \textbf{0.199} \\
        member\_of\_domain\_usage& 0.318& \textbf{0.34} \\
        member\_of\_domain\_region& 0.2& \textbf{0.392} \\
        \bottomrule
    \end{tabular}
    \caption{MRR of individual relations of the WN18RR dataset.}
    \label{tab:tab-rel_wise_study}
        \vspace{-2mm}
\end{table} 
\noindent \textbf{Implementation:} The Adam optimizer \cite{DBLP:journals/corr/KingmaB14} is used for learning with a learning rate of 0.1 with an exponentially decaying learning schedule with a decay rate of 0.1. The parameters are initialized using the He initialization \cite{he2015delving}. We perform a hyperparameter search on the embedding dimension $d \in \{50,100,200,500\}$, batch size $b \in \{256,512,1024\}$, negative samples $n_{neg} \in \{16,32,62,128,256,512\}$, negative adversarial sampling temperature $\alpha \in \{0.5, 1.0\}$ and the margin paramter $\gamma \in \{6,9,12,24\}$. We release code in public Github\footnote{\url{https://github.com/ansonb/HopfE}}. We consider two variants of our approach: 1) \textit{HopfE}, that does not contain any semantic properties 2) \textit{HopfE+Semantics}, which has entity attributes encoded on the fibers (cf., section \ref{sec:semantics}). \\
\textbf{Baselines:} We compare with the state-of-the-art link prediction methods mainly employing translational/semantic matching techniques since our approach is built upon them. Specifically we compare with TransE \cite{bordes2013translating}, Distmult \cite{DBLP:journals/corr/YangYHGD14a}, ComplEx \cite{trouillon2016complex}, RotatE \cite{sun2018rotate}, NagE \cite{yang2020nage}, and QuatE \cite{DBLP:conf/nips/0007TYL19}. We extend 3D rotations similar to NagE, however, our hyper parameters are restricted due to hardware limitations.
We inherit results and experiment settings from \cite{yang2020nage,DBLP:conf/nips/0007TYL19} for WN18, WN18RR, and FB15K237. We ignore alternative settings and hyper parameters provided by \cite{sun2020re,ruffinelli2019you}. On the Yago dataset, results are taken from RotatE. To keep the experiment settings same, the configuration of QuatE with N3 regularization and reciprocal learning has been omitted from our comparison considering these are optimizations extraneous to the model and we do not use them. 
\section{Results} \label{sec:results}
Our evaluation results are shown in the Table \ref{tab:tab-wn} and Table \ref{tab:tab-fb_yago}. On the WN18 dataset, the majority relation patterns comprise symmetry/antisymmetry and inversion. Our model which combines both properties (geometric \textit{interpretability} and 4D \textit{expressiveness}) achieves comparable results to the baselines. On the WN18RR dataset, the main relation pattern is the symmetry pattern, and the translation method (TransE) shows a limited performance because of its inability to infer symmetric patterns. QuatE reports slightly higher results than NagE, and our model achieves the best results on all metrics (except Hits@1). Similar to WN18RR, FB15K237 dataset does not contain inverse relation, and the majority relation pattern is composition. We see that NagE performs better than QuatE on the FB15K237 dataset due to its ability to infer non-abelian hyper relations. Except for MR, our model achieves better results than previous baselines illustrating the ability of HopfE in handling composition relations. On the YAGO3-10 dataset, triples mostly contain a human's descriptive attributes, such as marital status, profession, citizenship, and gender. Our model (HopfE) clearly outperforms the previous baseline (RotatE) on all metrics. Across all datasets, we observe that adding semantics does not necessarily result in a gain in the performance and the behavior varies as per the dataset.
\begin{table*}[!htb]
    \centering
    \begin{tabular}{p{2.5cm}|p{1.0cm}p{1.0cm}|p{1.0cm}p{1.0cm}|p{1.0cm}p{1.0cm}|p{1.0cm}p{1.0cm}}
        %\Cline{1-4}
        \toprule
        Model & \multicolumn{2}{c|}{N to N}& \multicolumn{2}{c|}{N to 1}& \multicolumn{2}{c|}{1 to N}& \multicolumn{2}{c}{1 to 1} \\
         & MRR & Hits@1 & MRR & Hits@1 & MRR & Hits@1 & MRR & Hits@1 \\
        %\Cline{1-4}
        \midrule
        HopfE(w/o Hopf) & 0.917& 0.889& 0.187& 0.112& \textbf{0.259}&\textbf{ 0.175}& 0.865& 0.762  \\
        \midrule
        HopfE & 0.939& 0.931& \textbf{0.195}& 0.123& 0.206& 0.136& \textbf{0.951}& \textbf{0.929} \\
        HopfE + Semantics & \textbf{0.941}& \textbf{0.932}& 0.193& \textbf{0.124}& 0.253& 0.167& 0.929& 0.881\\
        \bottomrule
    \end{tabular}
    \caption{Comparison of HopfE with a setting (HopfE(w/o Hopf)) where we omit Hopf Mapping on WN18 dataset. Inducing Hopf Fibration for dimensional expressivity shows a clear empirical edge in majority relation types. Best values in bold.}
    \label{tab:tab-many2many}
        \vspace{-2mm}
\end{table*} %%%%%%%%%%%%%%%%%%%%%%%%%%%%%%%%%%%%%%%%%%%%%%%%%%%%%%%%%%%%%%%%%%%
%Models possessing geometric interpretability (RotatE, NagE) perform better than QuatE aiming for higher dimensional expressiveness. 
%%%%%%%%%%%%%%%%%%%%%%%%%%%%%%%%%%%%%%%%%%%%%%%%%%%%%%%%%%%%%%%%%%%
%%%%%%%%%%%%%%%%%%%%%%%%%%%%%%%%%%%%%%%%%%%%%%%%%%%%%%%%%%%%%%%%%%%
%%%%%%%%%%%%%%%%%%%%%%%%%%%%%%%%%%%%%%%%%%%%%%%%%%%%%%%%%%%%%%%%%%%
%%%%%%%%%%%%%%%%%%%%%%%%%%%%%%%%%%%%%%%%%%%%%%%%%%%%%%%%%%%%%%%%%%%
%%%%%%%%%%%%%%%%%%%%%%%%%%%%%%%%%%%%%%%%%%%%%%%%%%%%%%%%%%%%%%%%%%%
%%%%%%%%%%%%%%%%%%%%%%%%%%%%%%%%%%%%%%%%%%%%%%%%%%%%%%%%%%%%%%%%%%%%%%%%%%%%%%%%%%%%%%%%%%%%%%%%%%%%%%%%%%%%%%%%%%%%%%%%%%%%%%%%%%%%%%%%%%%%%%%%%%%%%%%%%%%%%%%%%%%%%%%%%%%%%%%%%%%%%%%%%%%%%%%%%%%%%%%%%%%%%%%%%%%%%%%%%%%%%%%%%%%%%%%%%%%%%%%%%%%%%%%%%%%%%%%%%%%%%%%%%%%%%%%%%%%%%%%%%%%%%%%%%%%%%%%%%%%%%%%%%%%%%%%%%%%%%%%%%%%%%%%%%%%%%%%%%%%%%%%%%%%%%%%%%%%%%%%%%%%%%%%%%%%%%%%%%%%%%%%%%%%%%%%%%%%%%%%%%%%%%%%%%%%%%%%%%%%%%%%%%%%%%%%%%%%%%%%%%%%%%%%%%%%%%%%%%%%%%%%%%%%%%%%%%%%%%%%%%%%%%%%%%%%%%%%%%%%%%%%%%%%%%%%%%%%%%%%%%%%%%%%%%%%%%%%%
% \subsection{Ablation Study}
\subsection{Ablation Studies}
\textbf{Effect of Adding Semantics:}
To understand the impact of adding textual literals, we conduct an ablation on the attribute information induced in the model. We add each entity attribute to study their individual impact. The results for the YAGO3-10 dataset are reported in Table \ref{tab:tab-ablation_semantics} because this dataset provides both textual and numeric literals. We observe that different semantic information corresponds to variations in the results. Also, we note that adding some semantic attributes causes a drop in performance similar to a decline in performance in Table \ref{tab:tab-fb_yago} when semantics are collectively added. Furthermore, encoding of numeric literals is a limitation of our approach. This is possibly due to the implementation choice we made for encoding literals using character and word embeddings. 
Hence, there are two open questions for future research: 1) how a model can choose semantic information dynamically 2) how to incorporate numeric literals in this setting for higher performance gain as observed in alternative approaches such as \cite{kristiadi2019incorporating,garcia2017kblrn}.\\
%%%%%%%%%%%%%%%%%%%%%%%%%%%%%%%%%%%%%%%%%%%%%%%%%%%%%%%%%%%%%%%%%%%%%%%%%%%%%%%%%%%%%%%%%%%%%%%%%%%%%%%%%%%%%%%%%%%%%%%%%%%%%%%%%%%%%%%%%%%%%%%%%%%%%%%%%%%%%%%%%%%%%%%%%%%%%%%%%%%%%%%%%%%%%%%%%%%%%%%%%%%%%%%%%%%%%%%%%%%%%%%%%%%%%%%%%%%%%%%%%%%%%%%%%%%%%%%%%%%%%%%%%%%%%%%%%%%%%%%%%%%%%%%%%%%%%%%%%%%%%%%%%%%%%%%%%%%%%%%%%%%%%%%%%%%%%%%%%%%%%%%%%%%%%%%%%%%%%%%%%%%%%%%%%%%%%%%%%%%%%%%%%%%%%%%%%%%%%%%%%%%%%%%%%%%%%%%%%%%%%%%%%%%%%%%%%%%%%%%%%%%%%%%%%%%%%%%%%%%%%%%%%%%%%%%%%%%%%%%%%%%%%%%%%%%%%%%%%%%%%%%%%%%%%%%%%%%%%%%%%%%%%%%%%%%%%%%%
\textbf{Effect of Hopf Fibration:} \label{sec:many-to-many-ablation}
% Mention N2N shows improved performance and the datasets which have more of such relations have good performance
Our rationale here is to understand empirical advantage of inverse Hopf Fibration. Hence, we created a variant of HopfE named "HopfE(w/o Hopf)" that only contains rotation in 3D sphere (section \ref{sec:approach-rotation}) and does not include mapping from 3D to 4D using Hopf Fibration compromising higher dimension expressiveness. For the same, we report results on individual relational patterns.
The results are in Table \ref{tab:tab-many2many}. We observe that the performance of HopfE is higher compared to the "HopfE(w/o Hopf)" setting for three types of relations. On WN18, the fraction of 1-to-N links is 0.04 percent. Hence, HopfE faces challenges in learning the corresponding mapping into higher dimensions. Overall, our experiment justifies that increasing dimensional expressivity does not compromise the overall average empirical results. 
%%%%%%%%%%%%%%%%%%%%%%%%%%%%%%%%%%%%%%%%%%%%%%%%%%%%%%%%%%%%%%%%%%%%%%%%%%%%%%%%%%%%%%%%%%%%%%%%%%%%%%%%%%%%%%%%%%%%%%%%%%%%%%%%%%%%%%%%%%%%%%%%%%%%%%%%%%%%%%%%%%%%%%%%%%%%%%%%%%%%%%%%%%%%%%%%%%%%%%%%%%%%%%%%%%%%%%%%%%%%%%%%%%%%%%%%%%%%%%%%%%%%%%%%%%%%%%%%%%%%%%%%%%%%%%%%%%%%%%%%%%%%%%%%%%%%%%%%%%%%%%%%%%%%%%%%%%%%%%%%%%%%%%%%%%%%%%%%%%%%%%%%%%%%%%%%%%%%%%%%%%%%%%%%%%%%%%%%%%%%%%%%%%%%%%%%%%%%%%%%%%%%%%%%%%%%%%%%%%%%%%%%%%%%%%%%%%%%%%%%%%%%%%%%%%%%%%%%%%%%%%%%%%%%%%%%%%%%%%%%%%%%%%%%%%%%%%%%%%%%%%%%%%%%%%%%%%%%%%%%%%%%%%%%%%%%%%%%
\textbf{Geometric interpretation:}
Having established the importance of Hopf Fibration in the previous experiment, we now study if the geometric \textit{interpretability} of rotations in $S^2$ is maintained after training with the Fibrations on WN18 dataset. We analyze the inverse relations $has\_part$ and $part\_of$ and the compositional relations $derivationally\_related\_form$ (r1 in Figure \ref{fig:geom_analysis}),  $hypernym$ (r2 in Figure \ref{fig:geom_analysis}), and $synset\_domain\_topic\_of$ (r3 in Figure \ref{fig:geom_analysis}). As illustrated in Figure \ref{fig:geom_analysis}, the histogram of the sum of the rotation angles of the inverse relations are concentrated at $0^c$ and for the compositional relations the difference between the angles is concentrated around $0^c$ as expected, maintaining geometric \textit{interpretability}.\\
\begin{figure}[t!]
  \begin{subfigure}[t]{0.22\textwidth}
    \includegraphics[width=\textwidth]{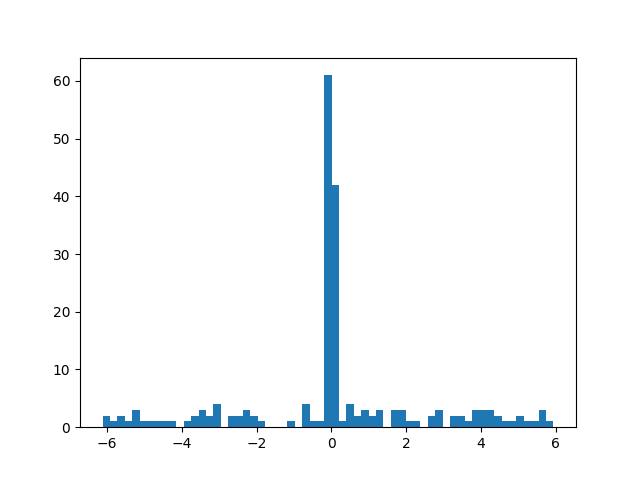}
    \caption{$\phi_{has\_part}+\phi_{part\_of}$}
    % \vspace{-2mm}
  \end{subfigure}%
  ~
  \begin{subfigure}[t]{0.22\textwidth}
    \includegraphics[width=\textwidth]{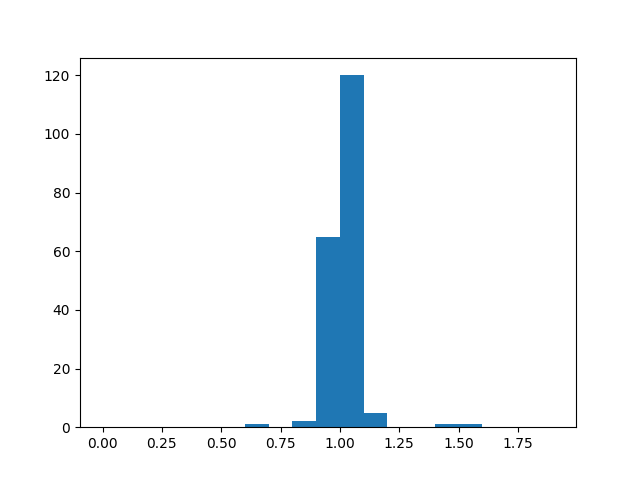}
    \caption{$\lVert r_{has\_part} \rVert * \lVert r_{has\_part} \rVert$}
    \label{fig:hist_q_inv}
    % \vspace{-2mm}
  \end{subfigure}
  
  \begin{subfigure}[t]{0.22\textwidth}
    \includegraphics[width=\textwidth]{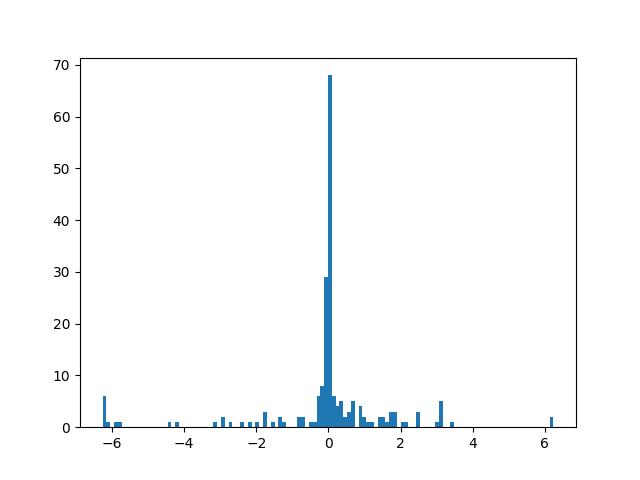}
    \caption{$\phi(r_{1} \otimes r_{2}) - \phi(r_{1})$}
    \label{fig:hist_phi_comp_101}
    % \vspace{-2mm}
  \end{subfigure}
  ~
  \begin{subfigure}[t]{0.22\textwidth}
    \includegraphics[width=\textwidth]{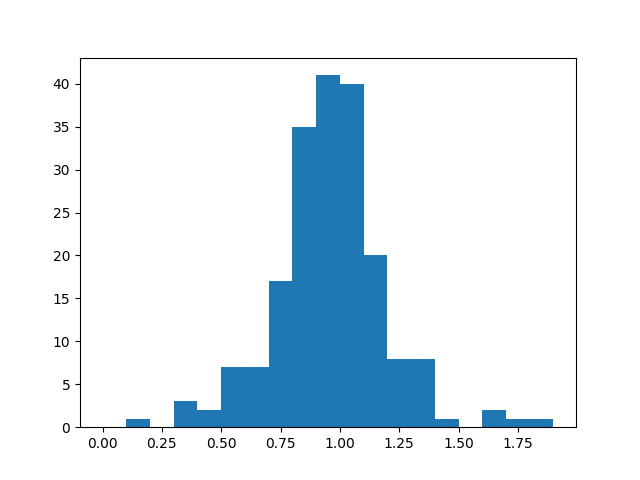}
    \caption{$\lVert r_{part\_of} \rVert$}
    \label{fig:hist_q2_inv}
    % \vspace{-2mm}
  \end{subfigure}
  
 \begin{subfigure}[t]{0.22\textwidth}
    \includegraphics[width=\textwidth]{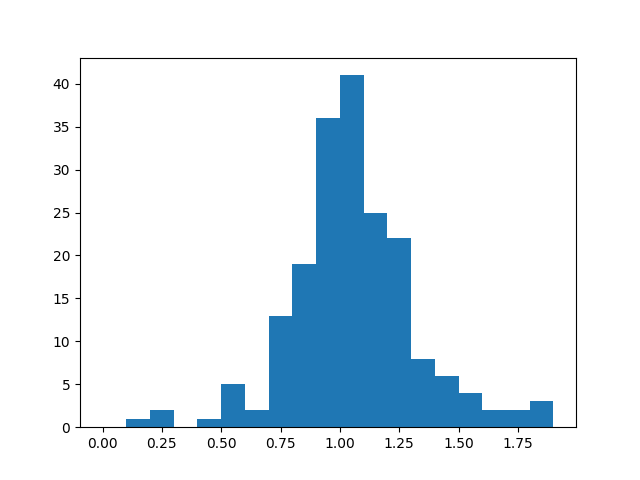}
    \caption{$\lVert r_{has\_part} \rVert$}
    \label{fig:hist_q1_inv}
    % \vspace{-2mm}
  \end{subfigure}
  ~
  \begin{subfigure}[t]{0.22\textwidth}
    \includegraphics[width=\textwidth]{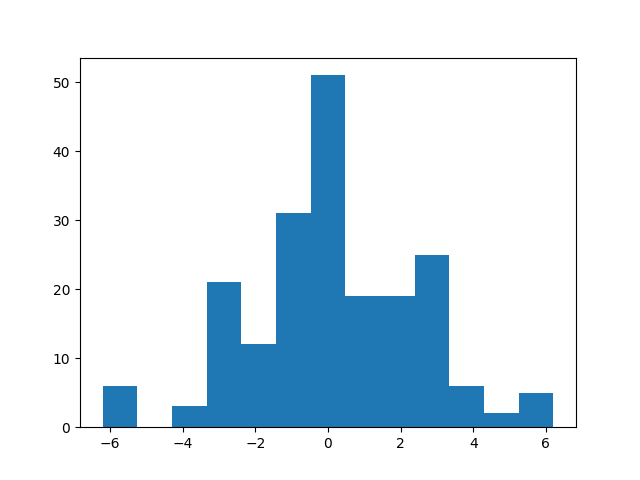}
    \caption{$\phi(r_{1} \otimes r_{2}) - \phi(r_{3})$}
    \label{fig:hist_phi_comp_105}
    % \vspace{-2mm}
  \end{subfigure}

  \caption{Analysis of the geometric properties of the inverse and compositional relations.}
  \label{fig:geom_analysis}
      \vspace{-1mm}
\end{figure}
\begin{figure}[ht!]
  \begin{subfigure}[t]{0.25\textwidth}
    \includegraphics[width=\textwidth]{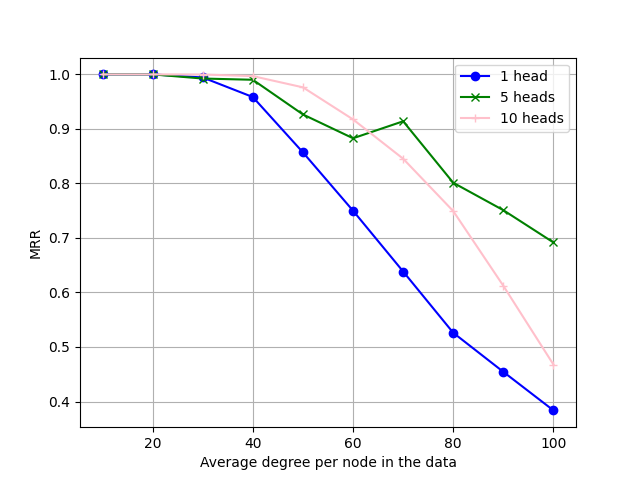}
    \caption{}
    % \vspace{-2mm}
    \label{fig:fig_synth_deg_vs_mrr}
  \end{subfigure}%
%   ~
  \begin{subfigure}[t]{0.25\textwidth}
    \includegraphics[width=\textwidth]{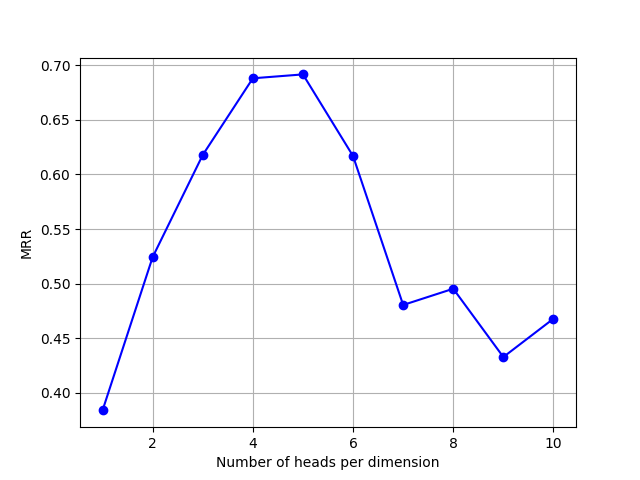}
    \caption{}
    \label{fig:fig_synth_heads_vs_mrr}
    % \vspace{-2mm}
  \end{subfigure}

  \caption{Effect of attribute heads on the performance.}
  \label{fig:fig_synth_heads/deg_vs_mrr}
     %\vspace{-1mm}
\end{figure}
\textbf{Relation wise performance:} \label{sec:relation-wise-performance}
Previous ablation study established that the geometric interpretability remains intact for HopfE. Now, we study \textit{expressiveness} in a higher dimension (in addition to Table \ref{tab:tab-wn} and Table \ref{tab:tab-fb_yago}). We perform a relation-wise performance on the WN18RR dataset to understand if HopfE can maintain \textit{expressiveness} compared to rotational methods which are in the lower dimensions. We borrow experimental settings from RotatE \cite{sun2018rotate}. As seen in Table \ref{tab:tab-rel_wise_study}, in comparison to the RotatE baseline (considering NagE skips this analysis), the HopfE model can perform better on these relations. The observed behavior implies that when we aim for geometric \textit{interpretation} using Hopf Fibration, the performance does not decrease for any relation.\\
%%%%%%%%%%%%%%%%%%%%%%%%%%%%%%%%%%%%%%%%%%%%%%%%%%%%%%%%%%%%%%%%%%%%%%%%%%%%%%%%%%%%%%%%%%%%%%%%%%%%%%%%%%%%%%%%%%%%%%%%%%%%%%%%%%%%%%%%%%%%%%%%%%%%%%%%%%%%%%%%%%%%%%%%%%%%%%%%%%%%%%%%%%%%%%%%%%%%%%%%%%%%%%%%%%%%%%%%%%%%%%%%%%%%%%%%%%%%%%%%%%%%%%%%%%%%%%%%%%%%%%%%%%%%%%%%%%%%%%%%%%%%%%%%%%%%%%%%%%%%%%%%%%%%%%%%%%%%%%%%%%%%%%%%%%%%%%%%%%%%%%%%%%%%%%%%%%%%%%%%%%%%%%%%%%%%%%%%%%%%%%%%%%%%%%%%%%%%%%%%%%%%%%%%%%%%%%%%%%%%%%%%%%%%%%%%%%%%%%%%%%%%%%%%%%%%%%%%%%%%%%%%%%%%%%%%%%%%%%%%%%%%%%%%%%%%%%%%%%%%%%%%%%%%%%%%%%%%%%%%%%%%%%%%%%%%%%%%
\textbf{Studying the effect of the number of heads:} \label{sec:heads}
In a real-world scenario, all KG entities will not have the same number of properties, and few of them may require a large number of attributes to model the KG ontology. In this experiment, we aim to understand how does adding more entity attributes for modeling dense relations impacts performance (cf., Claim \ref{claim_nclique})? Hence,
we study the effect of increasing the number of heads on the prediction performance using a synthetic dataset because this dataset allows us to have varying number of attribute of the entities. Specifically, we generate an Erdős–Rényi graph \cite{erdHos1959renyi} with an average degree varying from 10 to 100 in the increments of 10 steps. We use 1, 5, and 10 heads per entity per fiber(dimension). Also, note that since these are random graphs, we evaluate the performance on the train set as the performance on the test set is expected to be low in this setting. The dimensions(=10) and other common parameters are kept the same in all the models. The models are not all trained to convergence but for a fixed number of iterations. The Figure \ref{fig:fig_synth_deg_vs_mrr} illustrates that increasing the degree of nodes in the data in general results in a drop in the results, which is as expected. Another observation is that the setting with five heads can perform better than the one with ten heads at higher degrees. The observation possibly indicates that the latter model may require more training to fit better. A similar observation could be gleaned from Figure \ref{fig:fig_synth_heads_vs_mrr} where we study the effect of increasing the number of heads from 1 to 10 for a graph with an average degree per node as 100. We see that the metric improves from 1 to 5 heads and then begins to drop. We conclude that number of heads along with the training steps is a hyperparameter to be tuned for better results.

%%%%%%%%%%%%%%%%%%%%%%%%%%%%%%%%%%%%%%%%%%%%%%%%%%%%%%%%%%%%%%%%%%%%%%%%%%%%%%%%%%%%%%%%%%%%%%%%%%%%%%%%%%%%%%%%%%%%%%%%%%%%%%%%%%%%%%%%%%%%%%%%%%%%%%%%%%%%%%%%%%%%%%%%%%%%%%%%%%%%%%%%%%%%%%%%%%%%%%%%%%%%%%%%%%%%%%%%%%%%%%%%%%%%%%%%%%%%%%%%%%%%%%%%%%%%%%%%%%%%%%%%%%%%%%%%%%%%%%%%%%%%%%%%%%%%%%%%%%%%%%%%%%%%%%%%%%%%%%%%%%%%%%%%%%%%%%%%%%%%%%%%%%%%%%%%%%%%%%%%%%%%%%%%%%%%%%%%%%%%%%%%%%%%%%%%%%%%%%%%%%%%%%%%%%%%%%%%%%%%%%%%%%%%%%%%%%%%%%%%%%%%%%%%%%%%%%%%%%%%%%%%%%%%%%%%%%%%%%%%%%%%%%%%%%%%%%%%%%%%%%%%%%%%%%%%%%%%%%%%%%%%%%%%%%%%%%%%

\section{Conclusion} \label{sec:conclusion}
Our central research question was to study if a KGE approach can achieve geometric \textit{interpretability} along with the \textit{expressiveness} of four-dimensional space. The proposed theoretical foundations and associated empirical results of HopfE show that it is possible to maintain the geometric \textit{interpretability} and simultaneously increase the model’s \textit{expressiveness} in four-dimensional space.
Furthermore, HopfE enables encoding of the entity attributes to a fibration in a higher dimension for enriching the embeddings with the semantic information prevalent in real-world KGs. Our ablation studies systematically support several choices we made in the paper's scope, such as the use of Hopf Fibration, the effect of a number of heads, and the individual impact of entity semantic properties.  Besides providing empirical evidences complementing the theoretical foundations (Sections \ref{sec:theoretical} and \ref{sec:method}), our results open up important research questions for future research: 
1) In four dimensions space, how to utilize the semantic attributes (an essential aspect of KGE \cite{paulheim2018make}) more optimally to positively impacting the link prediction performance and induce semantic class hierarchies similar to \cite{zhang2020learning}. 2) How to utilize logic on the attributes (or description logic) for fine-grained reasoning and logical interpretability of a KGE model in 4D? We point readers to the open research directions as viable next steps.

\bibliographystyle{ACM-Reference-Format}
\bibliography{references}

%%% -*-BibTeX-*-
%%% Do NOT edit. File created by BibTeX with style
%%% ACM-Reference-Format-Journals [18-Jan-2012].

\begin{thebibliography}{54}

%%% ====================================================================
%%% NOTE TO THE USER: you can override these defaults by providing
%%% customized versions of any of these macros before the \bibliography
%%% command.  Each of them MUST provide its own final punctuation,
%%% except for \shownote{}, \showDOI{}, and \showURL{}.  The latter two
%%% do not use final punctuation, in order to avoid confusing it with
%%% the Web address.
%%%
%%% To suppress output of a particular field, define its macro to expand
%%% to an empty string, or better, \unskip, like this:
%%%
%%% \newcommand{\showDOI}[1]{\unskip}   % LaTeX syntax
%%%
%%% \def \showDOI #1{\unskip}           % plain TeX syntax
%%%
%%% ====================================================================

\ifx \showCODEN    \undefined \def \showCODEN     #1{\unskip}     \fi
\ifx \showDOI      \undefined \def \showDOI       #1{#1}\fi
\ifx \showISBNx    \undefined \def \showISBNx     #1{\unskip}     \fi
\ifx \showISBNxiii \undefined \def \showISBNxiii  #1{\unskip}     \fi
\ifx \showISSN     \undefined \def \showISSN      #1{\unskip}     \fi
\ifx \showLCCN     \undefined \def \showLCCN      #1{\unskip}     \fi
\ifx \shownote     \undefined \def \shownote      #1{#1}          \fi
\ifx \showarticletitle \undefined \def \showarticletitle #1{#1}   \fi
\ifx \showURL      \undefined \def \showURL       {\relax}        \fi
% The following commands are used for tagged output and should be
% invisible to TeX
\providecommand\bibfield[2]{#2}
\providecommand\bibinfo[2]{#2}
\providecommand\natexlab[1]{#1}
\providecommand\showeprint[2][]{arXiv:#2}

\bibitem[\protect\citeauthoryear{Allen, Balazevic, and Hospedales}{Allen
  et~al\mbox{.}}{2021}]%
        {allen2021interpreting}
\bibfield{author}{\bibinfo{person}{Carl Allen}, \bibinfo{person}{I Balazevic},
  {and} \bibinfo{person}{Timothy Hospedales}.} \bibinfo{year}{2021}\natexlab{}.
\newblock \showarticletitle{Interpreting knowledge graph relation
  representation from word embeddings}. In
  \bibinfo{booktitle}{\emph{International Conference on Learning
  Representations}}.
\newblock


\bibitem[\protect\citeauthoryear{Bastos, Nadgeri, Singh, Mulang, Shekarpour,
  Hoffart, and Kaul}{Bastos et~al\mbox{.}}{2021}]%
        {bastos2020recon}
\bibfield{author}{\bibinfo{person}{Anson Bastos}, \bibinfo{person}{Abhishek
  Nadgeri}, \bibinfo{person}{Kuldeep Singh}, \bibinfo{person}{Isaiah~Onando
  Mulang}, \bibinfo{person}{Saeedeh Shekarpour}, \bibinfo{person}{Johannes
  Hoffart}, {and} \bibinfo{person}{Manohar Kaul}.}
  \bibinfo{year}{2021}\natexlab{}.
\newblock \showarticletitle{RECON: Relation Extraction using Knowledge Graph
  Context in a Graph Neural Network}. In \bibinfo{booktitle}{\emph{Proceedings
  of The Web Conference (WWW)}}.
\newblock


\bibitem[\protect\citeauthoryear{Baues and Baues}{Baues and Baues}{1989}]%
        {baues1989algebraic}
\bibfield{author}{\bibinfo{person}{Hans~J Baues} {and}
  \bibinfo{person}{Hans~Joachim Baues}.} \bibinfo{year}{1989}\natexlab{}.
\newblock \bibinfo{booktitle}{\emph{Algebraic homotopy}}.
  Vol.~\bibinfo{volume}{15}.
\newblock \bibinfo{publisher}{Cambridge university press}.
\newblock


\bibitem[\protect\citeauthoryear{Bollacker, Cook, and Tufts}{Bollacker
  et~al\mbox{.}}{2007}]%
        {DBLP:conf/aaai/BollackerCT07}
\bibfield{author}{\bibinfo{person}{Kurt~D. Bollacker},
  \bibinfo{person}{Robert~P. Cook}, {and} \bibinfo{person}{Patrick Tufts}.}
  \bibinfo{year}{2007}\natexlab{}.
\newblock \showarticletitle{{Freebase: A Shared Database of Structured General
  Human Knowledge}}. In \bibinfo{booktitle}{\emph{AAAI}}.
\newblock


\bibitem[\protect\citeauthoryear{Bordes, Usunier, Garcia-Duran, Weston, and
  Yakhnenko}{Bordes et~al\mbox{.}}{2013}]%
        {bordes2013translating}
\bibfield{author}{\bibinfo{person}{Antoine Bordes}, \bibinfo{person}{Nicolas
  Usunier}, \bibinfo{person}{Alberto Garcia-Duran}, \bibinfo{person}{Jason
  Weston}, {and} \bibinfo{person}{Oksana Yakhnenko}.}
  \bibinfo{year}{2013}\natexlab{}.
\newblock \showarticletitle{Translating embeddings for modeling
  multi-relational data}. In \bibinfo{booktitle}{\emph{NeurlPS}}.
  \bibinfo{pages}{1--9}.
\newblock


\bibitem[\protect\citeauthoryear{Chami, Wolf, Juan, Sala, Ravi, and
  R{\'e}}{Chami et~al\mbox{.}}{2020}]%
        {chami2020low}
\bibfield{author}{\bibinfo{person}{Ines Chami}, \bibinfo{person}{Adva Wolf},
  \bibinfo{person}{Da-Cheng Juan}, \bibinfo{person}{Frederic Sala},
  \bibinfo{person}{Sujith Ravi}, {and} \bibinfo{person}{Christopher R{\'e}}.}
  \bibinfo{year}{2020}\natexlab{}.
\newblock \showarticletitle{Low-Dimensional Hyperbolic Knowledge Graph
  Embeddings}. In \bibinfo{booktitle}{\emph{Proceedings of the 58th Annual
  Meeting of the Association for Computational Linguistics}}.
  \bibinfo{pages}{6901--6914}.
\newblock


\bibitem[\protect\citeauthoryear{Cuturi}{Cuturi}{2013}]%
        {cuturi2013sinkhorn}
\bibfield{author}{\bibinfo{person}{Marco Cuturi}.}
  \bibinfo{year}{2013}\natexlab{}.
\newblock \showarticletitle{Sinkhorn distances: Lightspeed computation of
  optimal transport}.
\newblock \bibinfo{journal}{\emph{Advances in neural information processing
  systems}}  \bibinfo{volume}{26} (\bibinfo{year}{2013}),
  \bibinfo{pages}{2292--2300}.
\newblock


\bibitem[\protect\citeauthoryear{Erd{\"o}s and R{\'e}nyi}{Erd{\"o}s and
  R{\'e}nyi}{2011}]%
        {erdHos1959renyi}
\bibfield{author}{\bibinfo{person}{Paul Erd{\"o}s} {and}
  \bibinfo{person}{Alfr{\'e}d R{\'e}nyi}.} \bibinfo{year}{2011}\natexlab{}.
\newblock \showarticletitle{On the evolution of random graphs}.
\newblock In \bibinfo{booktitle}{\emph{The structure and dynamics of
  networks}}. \bibinfo{publisher}{Princeton University Press},
  \bibinfo{pages}{38--82}.
\newblock


\bibitem[\protect\citeauthoryear{Gao, Sun, Shan, Lin, and Wang}{Gao
  et~al\mbox{.}}{2020}]%
        {gao2020rotate3d}
\bibfield{author}{\bibinfo{person}{Chang Gao}, \bibinfo{person}{Chengjie Sun},
  \bibinfo{person}{Lili Shan}, \bibinfo{person}{Lei Lin}, {and}
  \bibinfo{person}{Mingjiang Wang}.} \bibinfo{year}{2020}\natexlab{}.
\newblock \showarticletitle{Rotate3D: Representing Relations as Rotations in
  Three-Dimensional Space for Knowledge Graph Embedding}. In
  \bibinfo{booktitle}{\emph{Proceedings of the 29th ACM International
  Conference on Information \& Knowledge Management}}.
  \bibinfo{pages}{385--394}.
\newblock


\bibitem[\protect\citeauthoryear{Garcia-Duran and Niepert}{Garcia-Duran and
  Niepert}{2017}]%
        {garcia2017kblrn}
\bibfield{author}{\bibinfo{person}{Alberto Garcia-Duran} {and}
  \bibinfo{person}{Mathias Niepert}.} \bibinfo{year}{2017}\natexlab{}.
\newblock \showarticletitle{Kblrn: End-to-end learning of knowledge base
  representations with latent, relational, and numerical features}.
\newblock \bibinfo{journal}{\emph{arXiv preprint arXiv:1709.04676}}
  (\bibinfo{year}{2017}).
\newblock


\bibitem[\protect\citeauthoryear{Gesese, Biswas, Alam, and Sack}{Gesese
  et~al\mbox{.}}{2019}]%
        {gesese2019survey}
\bibfield{author}{\bibinfo{person}{Genet~Asefa Gesese}, \bibinfo{person}{Russa
  Biswas}, \bibinfo{person}{Mehwish Alam}, {and} \bibinfo{person}{Harald
  Sack}.} \bibinfo{year}{2019}\natexlab{}.
\newblock \showarticletitle{A survey on knowledge graph embeddings with
  literals: Which model links better literal-ly?}
\newblock \bibinfo{journal}{\emph{Semantic Web}} \bibinfo{number}{Preprint}
  (\bibinfo{year}{2019}), \bibinfo{pages}{1--31}.
\newblock


\bibitem[\protect\citeauthoryear{Hamilton}{Hamilton}{1844}]%
        {hamilton1844lxxviii}
\bibfield{author}{\bibinfo{person}{William~Rowan Hamilton}.}
  \bibinfo{year}{1844}\natexlab{}.
\newblock \showarticletitle{LXXVIII. On quaternions; or on a new system of
  imaginaries in Algebra: To the editors of the Philosophical Magazine and
  Journal}.
\newblock \bibinfo{journal}{\emph{The London, Edinburgh, and Dublin
  Philosophical Magazine and Journal of Science}} \bibinfo{volume}{25},
  \bibinfo{number}{169} (\bibinfo{year}{1844}), \bibinfo{pages}{489--495}.
\newblock


\bibitem[\protect\citeauthoryear{He, Zhang, Ren, and Sun}{He
  et~al\mbox{.}}{2015}]%
        {he2015delving}
\bibfield{author}{\bibinfo{person}{Kaiming He}, \bibinfo{person}{Xiangyu
  Zhang}, \bibinfo{person}{Shaoqing Ren}, {and} \bibinfo{person}{Jian Sun}.}
  \bibinfo{year}{2015}\natexlab{}.
\newblock \showarticletitle{Delving deep into rectifiers: Surpassing
  human-level performance on imagenet classification}. In
  \bibinfo{booktitle}{\emph{Proceedings of the IEEE international conference on
  computer vision}}. \bibinfo{pages}{1026--1034}.
\newblock


\bibitem[\protect\citeauthoryear{Hopf}{Hopf}{1931}]%
        {hopf1931abagrams}
\bibfield{author}{\bibinfo{person}{H Hopf}.} \bibinfo{year}{1931}\natexlab{}.
\newblock \showarticletitle{About the mapping of the three-dimensional sphere
  onto the spherical surface}.
\newblock \bibinfo{journal}{\emph{Math. Ann.}}  \bibinfo{volume}{104}
  (\bibinfo{year}{1931}), \bibinfo{pages}{637--665}.
\newblock


\bibitem[\protect\citeauthoryear{Ji, Liu, He, and Zhao}{Ji
  et~al\mbox{.}}{2016}]%
        {ji2016knowledge}
\bibfield{author}{\bibinfo{person}{Guoliang Ji}, \bibinfo{person}{Kang Liu},
  \bibinfo{person}{Shizhu He}, {and} \bibinfo{person}{Jun Zhao}.}
  \bibinfo{year}{2016}\natexlab{}.
\newblock \showarticletitle{Knowledge graph completion with adaptive sparse
  transfer matrix}. In \bibinfo{booktitle}{\emph{Proceedings of the AAAI
  Conference on Artificial Intelligence}}, Vol.~\bibinfo{volume}{30}.
\newblock


\bibitem[\protect\citeauthoryear{Ji, Pan, Cambria, Marttinen, and Yu}{Ji
  et~al\mbox{.}}{2021}]%
        {ji2020survey}
\bibfield{author}{\bibinfo{person}{Shaoxiong Ji}, \bibinfo{person}{Shirui Pan},
  \bibinfo{person}{Erik Cambria}, \bibinfo{person}{Pekka Marttinen}, {and}
  \bibinfo{person}{Philip~S Yu}.} \bibinfo{year}{2021}\natexlab{}.
\newblock \showarticletitle{A survey on knowledge graphs: Representation,
  acquisition and applications}.
\newblock \bibinfo{journal}{\emph{EEE Transactions on Neural Networks and
  Learning Systems}} (\bibinfo{year}{2021}).
\newblock


\bibitem[\protect\citeauthoryear{Kingma and Ba}{Kingma and Ba}{2015}]%
        {DBLP:journals/corr/KingmaB14}
\bibfield{author}{\bibinfo{person}{Diederik~P. Kingma} {and}
  \bibinfo{person}{Jimmy Ba}.} \bibinfo{year}{2015}\natexlab{}.
\newblock \showarticletitle{Adam: {A} Method for Stochastic Optimization}. In
  \bibinfo{booktitle}{\emph{ICLR}}.
\newblock


\bibitem[\protect\citeauthoryear{Kristiadi, Khan, Lukovnikov, Lehmann, and
  Fischer}{Kristiadi et~al\mbox{.}}{2019}]%
        {kristiadi2019incorporating}
\bibfield{author}{\bibinfo{person}{Agustinus Kristiadi},
  \bibinfo{person}{Mohammad~Asif Khan}, \bibinfo{person}{Denis Lukovnikov},
  \bibinfo{person}{Jens Lehmann}, {and} \bibinfo{person}{Asja Fischer}.}
  \bibinfo{year}{2019}\natexlab{}.
\newblock \showarticletitle{Incorporating literals into knowledge graph
  embeddings}. In \bibinfo{booktitle}{\emph{International Semantic Web
  Conference}}. Springer, \bibinfo{pages}{347--363}.
\newblock


\bibitem[\protect\citeauthoryear{Li}{Li}{NA}]%
        {visualization}
\bibfield{author}{\bibinfo{person}{Samuel~J. Li}.}
  \bibinfo{year}{NA}\natexlab{}.
\newblock \bibinfo{title}{Hopf Fibration Visualization}.
\newblock
\newblock
\urldef\tempurl%
\url{https://samuelj.li/hopf-fibration/}
\showURL{%
\tempurl}


\bibitem[\protect\citeauthoryear{Lin, Liu, and Sun}{Lin et~al\mbox{.}}{2016}]%
        {lin2016knowledge}
\bibfield{author}{\bibinfo{person}{Yankai Lin}, \bibinfo{person}{Zhiyuan Liu},
  {and} \bibinfo{person}{Maosong Sun}.} \bibinfo{year}{2016}\natexlab{}.
\newblock \showarticletitle{Knowledge representation learning with entities,
  attributes and relations}.
\newblock \bibinfo{journal}{\emph{ethnicity}}  \bibinfo{volume}{1}
  (\bibinfo{year}{2016}), \bibinfo{pages}{41--52}.
\newblock


\bibitem[\protect\citeauthoryear{Lin, Liu, Sun, Liu, and Zhu}{Lin
  et~al\mbox{.}}{2015}]%
        {lin2015learning}
\bibfield{author}{\bibinfo{person}{Yankai Lin}, \bibinfo{person}{Zhiyuan Liu},
  \bibinfo{person}{Maosong Sun}, \bibinfo{person}{Yang Liu}, {and}
  \bibinfo{person}{Xuan Zhu}.} \bibinfo{year}{2015}\natexlab{}.
\newblock \showarticletitle{Learning entity and relation embeddings for
  knowledge graph completion}. In \bibinfo{booktitle}{\emph{Proceedings of the
  AAAI Conference on Artificial Intelligence}}, Vol.~\bibinfo{volume}{29}.
\newblock


\bibitem[\protect\citeauthoryear{Lu and Hu}{Lu and Hu}{2020}]%
        {lu2020dense}
\bibfield{author}{\bibinfo{person}{Haonan Lu} {and} \bibinfo{person}{Hailin
  Hu}.} \bibinfo{year}{2020}\natexlab{}.
\newblock \showarticletitle{DensE: An Enhanced Non-Abelian Group Representation
  for Knowledge Graph Embedding}.
\newblock \bibinfo{journal}{\emph{arXiv preprint arXiv:2008.04548}}
  (\bibinfo{year}{2020}).
\newblock


\bibitem[\protect\citeauthoryear{Lu, Pramanik, Saha~Roy, Abujabal, Wang, and
  Weikum}{Lu et~al\mbox{.}}{2019}]%
        {lu2019answering}
\bibfield{author}{\bibinfo{person}{Xiaolu Lu}, \bibinfo{person}{Soumajit
  Pramanik}, \bibinfo{person}{Rishiraj Saha~Roy}, \bibinfo{person}{Abdalghani
  Abujabal}, \bibinfo{person}{Yafang Wang}, {and} \bibinfo{person}{Gerhard
  Weikum}.} \bibinfo{year}{2019}\natexlab{}.
\newblock \showarticletitle{Answering complex questions by joining
  multi-document evidence with quasi knowledge graphs}. In
  \bibinfo{booktitle}{\emph{Proceedings of the 42nd International ACM SIGIR
  Conference on Research and Development in Information Retrieval}}.
  \bibinfo{pages}{105--114}.
\newblock


\bibitem[\protect\citeauthoryear{Lyons}{Lyons}{2003}]%
        {lyons2003elementary}
\bibfield{author}{\bibinfo{person}{David~W Lyons}.}
  \bibinfo{year}{2003}\natexlab{}.
\newblock \showarticletitle{An elementary introduction to the Hopf fibration}.
\newblock \bibinfo{journal}{\emph{Mathematics magazine}} \bibinfo{volume}{76},
  \bibinfo{number}{2} (\bibinfo{year}{2003}), \bibinfo{pages}{87--98}.
\newblock


\bibitem[\protect\citeauthoryear{Marsden and Ratiu}{Marsden and Ratiu}{1995}]%
        {marsden1995introduction}
\bibfield{author}{\bibinfo{person}{Jerrold~E Marsden} {and}
  \bibinfo{person}{Tudor~S Ratiu}.} \bibinfo{year}{1995}\natexlab{}.
\newblock \showarticletitle{Introduction to mechanics and symmetry}.
\newblock \bibinfo{journal}{\emph{Physics Today}} \bibinfo{volume}{48},
  \bibinfo{number}{12} (\bibinfo{year}{1995}), \bibinfo{pages}{65}.
\newblock


\bibitem[\protect\citeauthoryear{Mikolov, Chen, Corrado, and Dean}{Mikolov
  et~al\mbox{.}}{2013}]%
        {mikolov2013efficient}
\bibfield{author}{\bibinfo{person}{Tomas Mikolov}, \bibinfo{person}{Kai Chen},
  \bibinfo{person}{Greg Corrado}, {and} \bibinfo{person}{Jeffrey Dean}.}
  \bibinfo{year}{2013}\natexlab{}.
\newblock \showarticletitle{Efficient Estimation of Word Representations in
  Vector Space}.
\newblock \bibinfo{journal}{\emph{arXiv preprint arXiv:1301.3781}}
  (\bibinfo{year}{2013}).
\newblock


\bibitem[\protect\citeauthoryear{Monge}{Monge}{1781}]%
        {monge1781memoire}
\bibfield{author}{\bibinfo{person}{Gaspard Monge}.}
  \bibinfo{year}{1781}\natexlab{}.
\newblock \showarticletitle{Memoir on the theory of cuttings and embankments}.
\newblock \bibinfo{journal}{\emph{Histoire de l'Acad {\ 'e} mie Royale des
  Sciences de Paris}} (\bibinfo{year}{1781}).
\newblock


\bibitem[\protect\citeauthoryear{Mosseri and Dandoloff}{Mosseri and
  Dandoloff}{2001}]%
        {mosseri2001geometry}
\bibfield{author}{\bibinfo{person}{R{\'e}my Mosseri} {and}
  \bibinfo{person}{Rossen Dandoloff}.} \bibinfo{year}{2001}\natexlab{}.
\newblock \showarticletitle{Geometry of entangled states, Bloch spheres and
  Hopf fibrations}.
\newblock \bibinfo{journal}{\emph{Journal of Physics A: Mathematical and
  General}} \bibinfo{volume}{34}, \bibinfo{number}{47} (\bibinfo{year}{2001}),
  \bibinfo{pages}{10243}.
\newblock


\bibitem[\protect\citeauthoryear{PALMONARI and MINERVINI}{PALMONARI and
  MINERVINI}{2020}]%
        {palmonari2020knowledge}
\bibfield{author}{\bibinfo{person}{Matteo PALMONARI} {and}
  \bibinfo{person}{Pasquale MINERVINI}.} \bibinfo{year}{2020}\natexlab{}.
\newblock \showarticletitle{Knowledge graph embeddings and explainable AI}.
\newblock \bibinfo{journal}{\emph{Knowledge Graphs for Explainable Artificial
  Intelligence: Foundations, Applications and Challenges}}
  \bibinfo{volume}{47} (\bibinfo{year}{2020}), \bibinfo{pages}{49}.
\newblock


\bibitem[\protect\citeauthoryear{Paulheim}{Paulheim}{2018}]%
        {paulheim2018make}
\bibfield{author}{\bibinfo{person}{Heiko Paulheim}.}
  \bibinfo{year}{2018}\natexlab{}.
\newblock \showarticletitle{Make embeddings semantic again!}. In
  \bibinfo{booktitle}{\emph{International Semantic Web Conference
  (P\&D/Industry/BlueSky)}}.
\newblock


\bibitem[\protect\citeauthoryear{Pennington, Socher, and Manning}{Pennington
  et~al\mbox{.}}{2014}]%
        {pennington2014glove}
\bibfield{author}{\bibinfo{person}{Jeffrey Pennington},
  \bibinfo{person}{Richard Socher}, {and} \bibinfo{person}{Christopher~D.
  Manning}.} \bibinfo{year}{2014}\natexlab{}.
\newblock \showarticletitle{Glove: Global Vectors for Word Representation}. In
  \bibinfo{booktitle}{\emph{Proceedings of the 2014 Conference on Empirical
  Methods in Natural Language Processing, {EMNLP} 2014, {A} meeting of SIGDAT,
  a Special Interest Group of the {ACL}}}. \bibinfo{publisher}{{ACL}},
  \bibinfo{pages}{1532--1543}.
\newblock


\bibitem[\protect\citeauthoryear{Pugh and Pugh}{Pugh and Pugh}{2002}]%
        {pugh2002real}
\bibfield{author}{\bibinfo{person}{Charles~Chapman Pugh} {and}
  \bibinfo{person}{CC Pugh}.} \bibinfo{year}{2002}\natexlab{}.
\newblock \bibinfo{booktitle}{\emph{Real mathematical analysis}}.
  Vol.~\bibinfo{volume}{2011}.
\newblock \bibinfo{publisher}{Springer}.
\newblock


\bibitem[\protect\citeauthoryear{Ruffinelli, Broscheit, and Gemulla}{Ruffinelli
  et~al\mbox{.}}{2019}]%
        {ruffinelli2019you}
\bibfield{author}{\bibinfo{person}{Daniel Ruffinelli}, \bibinfo{person}{Samuel
  Broscheit}, {and} \bibinfo{person}{Rainer Gemulla}.}
  \bibinfo{year}{2019}\natexlab{}.
\newblock \showarticletitle{You can teach an old dog new tricks! on training
  knowledge graph embeddings}. In \bibinfo{booktitle}{\emph{International
  Conference on Learning Representations}}.
\newblock


\bibitem[\protect\citeauthoryear{Sakor, Singh, Patel, and Vidal}{Sakor
  et~al\mbox{.}}{2020}]%
        {sakor2020falcon}
\bibfield{author}{\bibinfo{person}{Ahmad Sakor}, \bibinfo{person}{Kuldeep
  Singh}, \bibinfo{person}{Anery Patel}, {and} \bibinfo{person}{Maria-Esther
  Vidal}.} \bibinfo{year}{2020}\natexlab{}.
\newblock \showarticletitle{Falcon 2.0: An entity and relation linking tool
  over wikidata}. In \bibinfo{booktitle}{\emph{Proceedings of the 29th ACM
  International Conference on Information \& Knowledge Management}}.
  \bibinfo{pages}{3141--3148}.
\newblock


\bibitem[\protect\citeauthoryear{Sun, Deng, Nie, and Tang}{Sun
  et~al\mbox{.}}{2018}]%
        {sun2018rotate}
\bibfield{author}{\bibinfo{person}{Zhiqing Sun}, \bibinfo{person}{Zhi-Hong
  Deng}, \bibinfo{person}{Jian-Yun Nie}, {and} \bibinfo{person}{Jian Tang}.}
  \bibinfo{year}{2018}\natexlab{}.
\newblock \showarticletitle{RotatE: Knowledge Graph Embedding by Relational
  Rotation in Complex Space}. In \bibinfo{booktitle}{\emph{International
  Conference on Learning Representations}}.
\newblock


\bibitem[\protect\citeauthoryear{Sun, Vashishth, Sanyal, Talukdar, and
  Yang}{Sun et~al\mbox{.}}{2020}]%
        {sun2020re}
\bibfield{author}{\bibinfo{person}{Zhiqing Sun}, \bibinfo{person}{Shikhar
  Vashishth}, \bibinfo{person}{Soumya Sanyal}, \bibinfo{person}{Partha
  Talukdar}, {and} \bibinfo{person}{Yiming Yang}.}
  \bibinfo{year}{2020}\natexlab{}.
\newblock \showarticletitle{A Re-evaluation of Knowledge Graph Completion
  Methods}. In \bibinfo{booktitle}{\emph{Proceedings of the 58th Annual Meeting
  of the Association for Computational Linguistics}}.
  \bibinfo{pages}{5516--5522}.
\newblock


\bibitem[\protect\citeauthoryear{Toutanova and Chen}{Toutanova and
  Chen}{2015}]%
        {toutanova2015observed}
\bibfield{author}{\bibinfo{person}{Kristina Toutanova} {and}
  \bibinfo{person}{Danqi Chen}.} \bibinfo{year}{2015}\natexlab{}.
\newblock \showarticletitle{Observed versus latent features for knowledge base
  and text inference}. In \bibinfo{booktitle}{\emph{Proceedings of the 3rd
  workshop on continuous vector space models and their compositionality}}.
  \bibinfo{pages}{57--66}.
\newblock


\bibitem[\protect\citeauthoryear{Trouillon, Welbl, Riedel, Gaussier, and
  Bouchard}{Trouillon et~al\mbox{.}}{2016}]%
        {trouillon2016complex}
\bibfield{author}{\bibinfo{person}{Th{\'e}o Trouillon},
  \bibinfo{person}{Johannes Welbl}, \bibinfo{person}{Sebastian Riedel},
  \bibinfo{person}{{\'E}ric Gaussier}, {and} \bibinfo{person}{Guillaume
  Bouchard}.} \bibinfo{year}{2016}\natexlab{}.
\newblock \showarticletitle{Complex embeddings for simple link prediction}. In
  \bibinfo{booktitle}{\emph{International Conference on Machine Learning}}.
  PMLR, \bibinfo{pages}{2071--2080}.
\newblock


\bibitem[\protect\citeauthoryear{Vicci}{Vicci}{2001}]%
        {vicci2001quaternions}
\bibfield{author}{\bibinfo{person}{Leandra Vicci}.}
  \bibinfo{year}{2001}\natexlab{}.
\newblock \showarticletitle{Quaternions and rotations in 3-space: The algebra
  and its geometric interpretation}.
\newblock \bibinfo{journal}{\emph{Chapel Hill, NC, USA, Tech. Rep}}
  (\bibinfo{year}{2001}).
\newblock


\bibitem[\protect\citeauthoryear{Villani}{Villani}{2008}]%
        {villani2009a}
\bibfield{author}{\bibinfo{person}{C{\'e}dric Villani}.}
  \bibinfo{year}{2008}\natexlab{}.
\newblock \bibinfo{booktitle}{\emph{Optimal transport: old and new}}.
  Vol.~\bibinfo{volume}{338}.
\newblock \bibinfo{publisher}{Springer Science \& Business Media}.
\newblock


\bibitem[\protect\citeauthoryear{Vrandecic}{Vrandecic}{2012}]%
        {DBLP:conf/www/Vrandecic12}
\bibfield{author}{\bibinfo{person}{Denny Vrandecic}.}
  \bibinfo{year}{2012}\natexlab{}.
\newblock \showarticletitle{Wikidata: a new platform for collaborative data
  collection}. In \bibinfo{booktitle}{\emph{Proceedings of the 21st World Wide
  Web Conference, {WWW} 2012 (Companion Volume)}}. \bibinfo{pages}{1063--1064}.
\newblock


\bibitem[\protect\citeauthoryear{Wang, Mao, Wang, and Guo}{Wang
  et~al\mbox{.}}{2017}]%
        {wang2017knowledge}
\bibfield{author}{\bibinfo{person}{Quan Wang}, \bibinfo{person}{Zhendong Mao},
  \bibinfo{person}{Bin Wang}, {and} \bibinfo{person}{Li Guo}.}
  \bibinfo{year}{2017}\natexlab{}.
\newblock \showarticletitle{Knowledge graph embedding: A survey of approaches
  and applications}.
\newblock \bibinfo{journal}{\emph{IEEE Transactions on Knowledge and Data
  Engineering}} \bibinfo{volume}{29}, \bibinfo{number}{12}
  (\bibinfo{year}{2017}), \bibinfo{pages}{2724--2743}.
\newblock


\bibitem[\protect\citeauthoryear{Wang, Gemulla, and Li}{Wang
  et~al\mbox{.}}{2018}]%
        {wang2018multi}
\bibfield{author}{\bibinfo{person}{Yanjie Wang}, \bibinfo{person}{Rainer
  Gemulla}, {and} \bibinfo{person}{Hui Li}.} \bibinfo{year}{2018}\natexlab{}.
\newblock \showarticletitle{On multi-relational link prediction with bilinear
  models}. In \bibinfo{booktitle}{\emph{Proceedings of the AAAI Conference on
  Artificial Intelligence}}, Vol.~\bibinfo{volume}{32}.
\newblock


\bibitem[\protect\citeauthoryear{Wang, Li, Liu, and Tang}{Wang
  et~al\mbox{.}}{2016}]%
        {wang2016text}
\bibfield{author}{\bibinfo{person}{Zhigang Wang}, \bibinfo{person}{Juanzi Li},
  \bibinfo{person}{Zhiyuan Liu}, {and} \bibinfo{person}{Jie Tang}.}
  \bibinfo{year}{2016}\natexlab{}.
\newblock \showarticletitle{Text-enhanced representation learning for knowledge
  graph}. In \bibinfo{booktitle}{\emph{Proceedings of International Joint
  Conference on Artificial Intelligent (IJCAI)}}. \bibinfo{pages}{4--17}.
\newblock


\bibitem[\protect\citeauthoryear{Wang, Zhang, Feng, and Chen}{Wang
  et~al\mbox{.}}{2014}]%
        {wang2014knowledge}
\bibfield{author}{\bibinfo{person}{Zhen Wang}, \bibinfo{person}{Jianwen Zhang},
  \bibinfo{person}{Jianlin Feng}, {and} \bibinfo{person}{Zheng Chen}.}
  \bibinfo{year}{2014}\natexlab{}.
\newblock \showarticletitle{Knowledge graph embedding by translating on
  hyperplanes}. In \bibinfo{booktitle}{\emph{Proceedings of the AAAI Conference
  on Artificial Intelligence}}, Vol.~\bibinfo{volume}{28}.
\newblock


\bibitem[\protect\citeauthoryear{Whitney}{Whitney}{1940}]%
        {whitney1940theory}
\bibfield{author}{\bibinfo{person}{Hassler Whitney}.}
  \bibinfo{year}{1940}\natexlab{}.
\newblock \showarticletitle{On the theory of sphere-bundles}.
\newblock \bibinfo{journal}{\emph{Proceedings of the National Academy of
  Sciences of the United States of America}} \bibinfo{volume}{26},
  \bibinfo{number}{2} (\bibinfo{year}{1940}), \bibinfo{pages}{148}.
\newblock


\bibitem[\protect\citeauthoryear{Yang, Yih, He, Gao, and Deng}{Yang
  et~al\mbox{.}}{2015}]%
        {DBLP:journals/corr/YangYHGD14a}
\bibfield{author}{\bibinfo{person}{Bishan Yang}, \bibinfo{person}{Wen{-}tau
  Yih}, \bibinfo{person}{Xiaodong He}, \bibinfo{person}{Jianfeng Gao}, {and}
  \bibinfo{person}{Li Deng}.} \bibinfo{year}{2015}\natexlab{}.
\newblock \showarticletitle{Embedding Entities and Relations for Learning and
  Inference in Knowledge Bases}. In \bibinfo{booktitle}{\emph{3rd International
  Conference on Learning Representations, {ICLR}}}.
\newblock


\bibitem[\protect\citeauthoryear{Yang, Sha, and Hong}{Yang
  et~al\mbox{.}}{2020}]%
        {yang2020nage}
\bibfield{author}{\bibinfo{person}{Tong Yang}, \bibinfo{person}{Long Sha},
  {and} \bibinfo{person}{Pengyu Hong}.} \bibinfo{year}{2020}\natexlab{}.
\newblock \showarticletitle{NagE: Non-abelian group embedding for knowledge
  graphs}. In \bibinfo{booktitle}{\emph{Proceedings of the 29th ACM
  International Conference on Information \& Knowledge Management}}.
  \bibinfo{pages}{1735--1742}.
\newblock


\bibitem[\protect\citeauthoryear{Zhang, Tay, Zhang, Chan, Luu, Hui, and
  Fu}{Zhang et~al\mbox{.}}{2020b}]%
        {zhang2020beyond}
\bibfield{author}{\bibinfo{person}{Aston Zhang}, \bibinfo{person}{Yi Tay},
  \bibinfo{person}{SHUAI Zhang}, \bibinfo{person}{Alvin Chan},
  \bibinfo{person}{Anh~Tuan Luu}, \bibinfo{person}{Siu Hui}, {and}
  \bibinfo{person}{Jie Fu}.} \bibinfo{year}{2020}\natexlab{b}.
\newblock \showarticletitle{Beyond Fully-Connected Layers with Quaternions:
  Parameterization of Hypercomplex Multiplications with $1/n $ Parameters}. In
  \bibinfo{booktitle}{\emph{International Conference on Learning
  Representations}}.
\newblock


\bibitem[\protect\citeauthoryear{Zhang, Tay, Yao, and Liu}{Zhang
  et~al\mbox{.}}{2019}]%
        {DBLP:conf/nips/0007TYL19}
\bibfield{author}{\bibinfo{person}{Shuai Zhang}, \bibinfo{person}{Yi Tay},
  \bibinfo{person}{Lina Yao}, {and} \bibinfo{person}{Qi Liu}.}
  \bibinfo{year}{2019}\natexlab{}.
\newblock \showarticletitle{Quaternion Knowledge Graph Embeddings}. In
  \bibinfo{booktitle}{\emph{Advances in Neural Information Processing Systems
  32: Annual Conference on Neural Information Processing Systems 2019, NeurIPS
  2019}}. \bibinfo{pages}{2731--2741}.
\newblock


\bibitem[\protect\citeauthoryear{Zhang, Cai, Zhang, and Wang}{Zhang
  et~al\mbox{.}}{2020a}]%
        {zhang2020learning}
\bibfield{author}{\bibinfo{person}{Zhanqiu Zhang}, \bibinfo{person}{Jianyu
  Cai}, \bibinfo{person}{Yongdong Zhang}, {and} \bibinfo{person}{Jie Wang}.}
  \bibinfo{year}{2020}\natexlab{a}.
\newblock \showarticletitle{Learning hierarchy-aware knowledge graph embeddings
  for link prediction}. In \bibinfo{booktitle}{\emph{Proceedings of the AAAI
  Conference on Artificial Intelligence}}, Vol.~\bibinfo{volume}{34}.
  \bibinfo{pages}{3065--3072}.
\newblock


\bibitem[\protect\citeauthoryear{Zhao, Xu, Jin, and Jin}{Zhao
  et~al\mbox{.}}{2020b}]%
        {zhao2020convolutional}
\bibfield{author}{\bibinfo{person}{Feng Zhao}, \bibinfo{person}{Tao Xu},
  \bibinfo{person}{Langjunqing Jin}, {and} \bibinfo{person}{Hai Jin}.}
  \bibinfo{year}{2020}\natexlab{b}.
\newblock \showarticletitle{Convolutional Network Embedding of Text-enhanced
  Representation for Knowledge Graph Completion}.
\newblock \bibinfo{journal}{\emph{IEEE Internet of Things Journal}}
  (\bibinfo{year}{2020}).
\newblock


\bibitem[\protect\citeauthoryear{Zhao, Xie, Liu, Xiaojie, et~al\mbox{.}}{Zhao
  et~al\mbox{.}}{2020a}]%
        {zhao2020connecting}
\bibfield{author}{\bibinfo{person}{Yu Zhao}, \bibinfo{person}{Ruobing Xie},
  \bibinfo{person}{Kang Liu}, \bibinfo{person}{WANG Xiaojie}, {et~al\mbox{.}}}
  \bibinfo{year}{2020}\natexlab{a}.
\newblock \showarticletitle{Connecting Embeddings for Knowledge Graph Entity
  Typing}. In \bibinfo{booktitle}{\emph{Proceedings of the 58th Annual Meeting
  of the Association for Computational Linguistics}}.
  \bibinfo{pages}{6419--6428}.
\newblock


\bibitem[\protect\citeauthoryear{Zhu, Xu, Xu, and Chen}{Zhu
  et~al\mbox{.}}{2018}]%
        {zhu2018quaternion}
\bibfield{author}{\bibinfo{person}{Xuanyu Zhu}, \bibinfo{person}{Yi Xu},
  \bibinfo{person}{Hongteng Xu}, {and} \bibinfo{person}{Changjian Chen}.}
  \bibinfo{year}{2018}\natexlab{}.
\newblock \showarticletitle{Quaternion convolutional neural networks}. In
  \bibinfo{booktitle}{\emph{Proceedings of the European Conference on Computer
  Vision (ECCV)}}. \bibinfo{pages}{631--647}.
\newblock


\end{thebibliography}

\end{document}